\documentclass[a4paper, fleqn]{cas-dc}

\usepackage[utf8]{inputenc}
\usepackage[T1]{fontenc}
\usepackage{fix-cm}

\usepackage[authoryear,sort]{natbib}

\usepackage{amsmath,amsthm,amssymb,latexsym,amscd,amsfonts,enumerate, mathtools}

\usepackage{siunitx}
\usepackage{color, fancyhdr, graphicx, epsfig}
\usepackage{appendix}
\usepackage[english]{babel}
\usepackage{multirow}
\usepackage{tabularray}
\UseTblrLibrary{booktabs}

\usepackage{soul} 
\usepackage{booktabs}
\usepackage{pifont} 

\usepackage{subcaption}
\usepackage{lineno} 
\newtheorem{theorem}{Theorem}
\newtheorem{remark}{Remark}
\newtheorem{assumption}{Assumption}

\newtheorem{definition}{Definition}
\newtheorem{problem}{Problem}

\newtheorem{lemma}{Lemma}
\usepackage{algorithm}
\usepackage{algorithmicx}
\usepackage{tikz}
\usetikzlibrary{shapes.geometric, arrows.meta, positioning, calc, fit, backgrounds}
\graphicspath{{./figs/}{../}}

\def\tsc#1{\csdef{#1}{\textsc{\lowercase{#1}}\xspace}}
\tsc{WGM}
\tsc{QE}
\tsc{EP}
\tsc{PMS}
\tsc{BEC}
\tsc{DE}

\begin{document}
	\let\WriteBookmarks\relax
	\def\floatpagepagefraction{1}
	\def\textpagefraction{.001}
	\shorttitle{Data‑Driven Inverse-RL of Linear Systems with Model Uncertainty: A Convex Optimization View}
	\let\printorcid\relax
	\shortauthors{Duc Cuong Nguyen et~al.}
	
	\title [mode = title]{Data‑Driven Inverse Reinforcement Learning of Linear Systems with Model Uncertainty: A Convex Optimization View}                      
	
	
  \author[1,2]{Duc Cuong Nguyen}
	\credit{ Writing – original draft, Software, Methodology}
    \author[1]{Phuong Nam Dao}
	\credit{ Writing – review \& editing, Supervision, Methodology}
\address[1]{School of Electrical and Electronic Engineering, Hanoi University of Science and Technology, Hanoi, Vietnam}
\address[2]{Faculty of Engineering, Lund University, Lund, Sweden}
	
	\begin{abstract}
		Inverse reinforcement learning (IRL) for linear systems seeks a cost function whose optimal controller reproduces an expert policy from data. Existing data-driven methods for discrete-time linear systems are largely built on iterative policy/value updates, repeated matrix inversions, and, in some cases, an initial stabilizing controller, which can limit numerical robustness and practical applicability. This paper develops a convex-optimization framework for data-driven inverse reinforcement learning of discrete-time linear systems with model uncertainty. For nominal systems, we derive a semidefinite characterization of inverse optimality and a relaxed formulation that recovers an equivalent state-cost matrix together with a stabilizing controller from expert trajectories. We then obtain a model-free, off-policy reformulation by replacing the unknown system matrices with a regressed kernel matrix identified from local input--state data. For uncertain local systems, we show that a standard LQR cost is generally insufficient to represent every stabilizing target gain and therefore introduce a generalized LQR cost with a state--input cross term. Based on this model, we develop a convex data-driven inverse-RL method and extend it to robust cost design over a population of perturbations via differentiable semidefinite programming and stochastic approximation. Simulations on a discrete-time power-system example show accurate recovery of expert behavior, improved robustness to gain-estimation error and model mismatch, and a simpler computational pipeline than classical iterative inverse-RL schemes.
	\end{abstract}
	
	
	
	\begin{keywords}
Inverse reinforcement learning \sep Data-driven control \sep Semidefinite programming \sep Model uncertainty \sep Generalized LQR \sep Stochastic approximation
	\end{keywords}

	\maketitle
	\section{Introduction}
	Inverse reinforcement learning (IRL) has become an important tool for recovering implicit performance criteria from expert behavior and then reproducing that behavior on a local system. In linear systems, this objective is closely related to inverse optimal control for quadratic costs, where the goal is to identify a cost function whose optimal controller explains a demonstrated state-feedback law. Because linear quadratic regulation (LQR) provides a precise connection among cost functions, Riccati equations, and stabilizing feedback policies, it offers a natural foundation for studying IRL in control applications \cite{lewis2012optimal}. Such an inverse viewpoint is useful for imitation control, controller interpretation, and data-driven design when expert trajectories are available but the underlying objective function is not \cite{xue2021inverse,xue2021inverseQlearning,lian2022inverse}.

	Recent work has shown that IRL can be formulated effectively for several classes of linear control problems. Inverse optimal control has been studied for tracking problems in \cite{xue2021inverse}, discrete-time inverse reinforcement Q-learning through expert imitation has been developed in \cite{xue2021inverseQlearning}, and inverse-RL formulations for linear multiplayer games have been investigated in \cite{lian2022inverse}. More recently, inverse value-iteration and Q-learning methods with stability and robustness analysis were reported in \cite{lian2024inverse}, a modified Kleinman-iteration approach was proposed in \cite{wu2026inverse}, and output-feedback discounted IRL for continuous-time linear systems was studied in \cite{wu2026output}. These works demonstrate the growing maturity of linear IRL, but they also reveal a common structural feature: many existing methods rely on repeated policy evaluation, value iteration, or policy-improvement loops.

	Although iterative IRL schemes are effective in many settings, they can be computationally demanding and numerically sensitive. Repeated matrix inversions may amplify approximation and estimation errors, and some formulations require an initial stabilizing controller, which may be unavailable in practice \cite{xue2021inverseQlearning,lian2022inverse,lian2024inverse,wu2026inverse}. In addition, when expert and local systems are not identical, the recovered target controller may fail to admit a standard inverse-LQR interpretation. These issues become particularly important in data-driven settings, where the expert gain and closed-loop matrix must themselves be estimated from finite trajectory data.

	A natural way to address these limitations is to revisit inverse RL from a convex-optimization viewpoint. For forward LQR design, the relationship among Riccati equations, linear matrix inequalities (LMIs), and convex optimization is well established \cite{boyd1994linear,balakrishnan1995connections}. This viewpoint has also influenced data-driven and model-free optimal control, including semidefinite-programming approaches to data-driven LQR \cite{rotulo2020data} and Q-function-based or primal--dual learning methods \cite{lee2018primal,farjadnasab2022model}. These developments suggest that convex structure can be exploited not only for forward controller synthesis but also for inverse RL, with the potential to reduce iteration and improve numerical robustness.

	Motivated by this observation, this paper develops a data-driven IRL framework for discrete-time linear systems with model uncertainty. We consider both nominal local systems and uncertain local systems that match the expert only at the closed-loop level. For the uncertain case, we show that a standard LQR cost may be too restrictive because not every stabilizing feedback gain can be represented as the optimal controller of a standard discrete-time LQR problem. We therefore introduce a generalized quadratic cost with a state--input cross term, following the generalized LQR framework in \cite{heij2007introduction}, and then formulate a robust cost-design problem over a population of perturbed plants. To solve the robust problem, we combine semidefinite programming with differentiable convex optimization layers \cite{agrawal2019differentiable} and stochastic approximation \cite{robbins1951stochastic}.

	The main contributions of this paper are summarized as follows.
	\begin{enumerate}
		\item We derive a model-based convex inverse-RL formulation for nominal discrete-time linear systems. The resulting semidefinite conditions recover an equivalent state-cost matrix and, through a relaxed gain-matching formulation, a stabilizing controller that reproduces the expert behavior.
		\item We develop a model-free, off-policy reformulation by replacing the unknown system matrices with a regressed kernel matrix obtained from local input--state data. Under noise-free identifiability conditions, this formulation is an exact data-driven counterpart of the nominal convex inverse-RL problem.
		\item We extend the framework to local systems with model uncertainty. In this setting, we show why a generalized LQR cost with a cross term is needed, and we derive a convex data-driven inverse-RL formulation that recovers both a generalized cost pair and a stabilizing controller.
		\item We formulate a robust inverse-RL design problem over a distribution of plant perturbations and solve it through differentiable semidefinite programming and stochastic approximation. This extension allows the learned cost to account for model variability rather than fitting a single nominal perturbation.
		\item Compared with classical two-loop and single-loop inverse-RL schemes, the proposed Algorithms~\ref{Alg 1}--\ref{Alg 3} remove repeated inverse-RL policy/value iterations and do not require an initial stabilizing gain, which leads to a simpler and more numerically stable computational pipeline.
	\end{enumerate}

	The remainder of the paper is organized as follows. Section 2 formulates the inverse-RL problems for the expert and local systems. Section 3 presents the model-based convex inverse-RL methodology. Section 4 develops the data-driven methods for nominal and uncertain systems and introduces the robust design extension. Section 5 compares the proposed framework with related data-driven inverse-RL methods, and Section 6 reports simulation results.  Finally, Section 7 concludes the paper

	\textit{Notation:} For a symmetric matrix $M$, the relations $M \succ 0$ and $M \succeq 0$ denote positive definiteness and positive semidefiniteness, respectively. The Frobenius norm is denoted by $\|\cdot\|_F$, $\mathrm{vec}(\cdot)$ stacks the columns of a matrix into a vector, and $\otimes$ denotes the Kronecker product.
	
\section{Problem Formulation and Preliminaries}\label{Section 2}

This section introduces the expert system and the nominal local system. The inverse reinforcement learning (IRL) problem is formulated for the nominal local system, while extensions to local systems with uncertainty are addressed in subsequent sections.

\subsection{Expert Target Discrete-Time System}\label{Expert_Sys}

Consider the target discrete-time (DT) linear system
\begin{equation}\label{expert_dynamic}
x_e(k+1) = A x_e(k) + B u_e(k),
\end{equation}
where $A \in \mathbb{R}^{n \times n}$, $B \in \mathbb{R}^{n \times m}$ and $x_e \in \mathbb{R}^n$, $u_e \in \mathbb{R}^m$ denote the state and control input of the expert system, respectively. Moreover, the expert input is generated by a linear state-feedback control law $u_e(k) = K_{e,1} x_e(k)$, which minimizes the infinite-horizon quadratic cost function
\begin{equation}\label{expert_cost}
V_e(x_e(k),u_e(k)) = \sum_{k=0}^{\infty} x_e^{\top}(k) Q_e x_e(k) + u_e^{\top}(k) R_e u_e(k),
\end{equation}
where $Q_e \in \mathbb{R}^{n \times n} \succeq 0$ and $R_e \in \mathbb{R}^{m \times m} \succ 0$ are the state and input weighting matrices, respectively. In addition, we assume that $(A,B)$ is controllable and that the pair $(A,\sqrt{Q_e})$ is observable. 


The minimization of \eqref{expert_cost} corresponds to a standard linear quadratic regulator (LQR) problem, according to optimal control theory in \cite{lewis2012optimal}, whose solution is given by
\begin{align}
\label{K-expert}
&K_{e,1} = -\left(R_e + B^{\top} P_e B\right)^{-1} B^{\top} P_e A \\
\label{P-expert}
&P_e = Q_e + A^{\top} P_e A - A^{\top} P_e B \left(R_e + B^{\top} P_e B\right)^{-1} B^{\top} P_e A \\
\label{V-expert}
&V_e(x_e(k)) = \min_{u_e(k)} V_e(x_e(k),u_e(k)) = x_e^{\top}(k) P_e x_e(k).
\end{align}

\subsection{Nominal Local Discrete-Time System}

Consider a nominal local system with dynamics identical to the expert system,
\begin{equation}\label{nomial_sys}
x(k+1) = A x(k) + B u(k),
\end{equation}
where $x \in \mathbb{R}^n$ and $u \in \mathbb{R}^m$ denote the state and control input of the local system, respectively. 

The local system is associated with an arbitrary infinite-horizon quadratic cost function
\begin{equation}\label{local_cost}
V(x(k),u(k)) = \sum_{k=0}^{\infty} x^{\top}(k) Q x(k) + u^{\top}(k) R u(k),
\end{equation}
where $Q \in \mathbb{R}^{n \times n} \succeq 0$ and $R \in \mathbb{R}^{m \times m} \succ 0$.

Analogously to Subsection~\ref{Expert_Sys}, the optimal control law and value function are given by
\begin{align}
\label{K-local}
&K_1 = -\left(R + B^{\top} P B\right)^{-1} B^{\top} P A, \\
\label{P-local}
&P = Q + A^{\top} P A - A^{\top} P B \left(R + B^{\top} P B\right)^{-1} B^{\top} P A, \\
\label{V-local}
&V(x(k)) = \min_{u(k)} V(x(k),u(k)) = x^{\top}(k) P x(k), \\
\label{u-local}
&u(k) = K_1 x(k).
\end{align}

\subsection{Inverse Reinforcement Learning Formulation}

The inverse reinforcement learning problem is defined under the following standard assumptions.

\begin{assumption}\label{ass1}
The system matrices $(A,B)$ in \eqref{expert_dynamic} and \eqref{nomial_sys}, the cost matrices $(Q_e,R_e)$ in \eqref{expert_cost}, and the expert controller gain $K_{e,1}$ in \eqref{K-expert} are unknown.
\end{assumption}

\begin{assumption}\label{ass2}
The trajectories $(x_e(k),u_e(k))$ of \eqref{expert_dynamic} and $(x(k),u(k))$ of \eqref{nomial_sys} are available.
\end{assumption}

\begin{definition}\label{def1}
The cost matrix $Q$ in \eqref{local_cost} is said to be equivalent to the expert cost matrix $Q_e$ in \eqref{expert_cost} if, for any given $R \in \mathbb{R}^{m \times m} \succ 0$, solving the LQR problem \eqref{K-local}–\eqref{u-local} yields a controller gain $K_1$ identical to the expert controller gain $K_{e,1}$ defined in \eqref{K-expert}.
\end{definition}

\begin{problem} \label{prop1}
Given Assumptions~\ref{ass1}–\ref{ass2}, select $R \in \mathbb{R}^{m \times m} \succ 0$ and determine a cost matrix $Q$ equivalent to $Q_e$ such that the resulting optimal controller gain coincides with the expert controller $K_{e,1}$, as defined in Definition~\ref{def1}.
\end{problem}

\section{Model-Based Inverse RL Methodology}

This section presents a model-based inverse reinforcement learning (IRL) method based on convex programming. 

The first step is to identify the target controller gain and the target closed-loop dynamics of the expert system. The target closed-loop dynamics are defined as
\begin{equation}\label{target_dyna}
F_{e,1} = A + B K_{e,1} .
\end{equation}

Both the target closed-loop matrix and the expert controller gain can be estimated from the expert trajectory data $(x_e(k), u_e(k))$ using the least-squares method described in \cite{xue2021inverse}:
\begin{align}
\label{est_target_dyna}
\hat{F}_{e,1} &= X_e^{k+1} X_e^{k\top} \left(X_e^{k} X_e^{k\top}\right)^{-1}, \\
\label{est_target_gain}
\hat{K}_{e,1} &= U_e^{k} X_e^{k\top} \left(X_e^{k} X_e^{k\top}\right)^{-1},
\end{align}
where
\begin{align}
    U_e^{k}= [u_e(k), \cdots,u_e(k+N_{\mathrm{d}}-1)], \\
    X_e^{k}= [x_e(k), \cdots,x_e(k+N_{\mathrm{d}}-1)],\\
    X_e^{k+1}= [x_e(k), \cdots,x_e(k+N_{\mathrm{d}})]]
\end{align}
Here, $N_{\mathrm{d}} > n$ denotes the number of data groups.

Suppose that the optimal cost matrix $Q$ corresponding to a selected $R \succ 0$ is obtained as a solution to Problem~\ref{prop1}. It is well known that the solution must satisfy the following conditions \cite{lian2022inverse,xue2021inverse,xue2021inverseQlearning}:
\begin{align}
\label{ric_cond}
&P = Q + A^{\top} P A - A^{\top} P B \left(R + B^{\top} P B\right)^{-1} B^{\top} P A, \\
\label{gain_cond}
&\left(R + B^{\top} P B\right) K_{e,1} =- B^{\top} P A,
\end{align}
where $K_{e,1}$ is the expert controller gain in \eqref{K-expert}.

Substituting the definition of the target closed-loop dynamics \eqref{target_dyna} into \eqref{ric_cond}–\eqref{gain_cond} yields the equivalent single condition
\begin{equation}\label{convex_cond}
P = Q + (K_{e,1})^{\top} R K_{e,1} + (F_{e,1})^{\top} P F_{e,1} .
\end{equation}

Since \eqref{convex_cond} is linear with respect to the matrix variables $P$ and $Q$, a feasibility problem can be formulated using semidefinite programming. Using the estimated quantities \eqref{est_target_dyna} and \eqref{est_target_gain} and a selected $R \succ 0$, the feasibility problem is given by

\begin{equation}\label{opt_feas}
\begin{aligned}
& \mathrm{find}
& & P,\ Q \\
& \mathrm{subject \ to}
& &
P = Q + (\hat{K}_{e,1})^{\top} R \hat{K}_{e,1} + (\hat{F}_{e,1})^{\top} P \hat{F}_{e,1}, \\
& & &
\left(R + B^{\top} P B\right)\hat{K}_{e,1} = - B^{\top} P A\\
&&&
Q \succeq 0,\quad P \succ 0, \\
& & &
\hat{K}_{e,1} \ \text{satisfies \eqref{est_target_gain}}, \quad
\hat{F}_{e,1} \ \text{satisfies \eqref{est_target_dyna}} .
\end{aligned}
\end{equation}

The feasibility problem \eqref{opt_feas} can be solved using standard convex optimization solvers such as MOSEK, SCS, or CVXOPT. Any feasible solution satisfies \eqref{convex_cond}, which is equivalent to \eqref{ric_cond} and \eqref{gain_cond}, and therefore constitutes a valid solution to Problem~\ref{prop1}.

\begin{remark} \label{remark2}
It is well known that the feasibility problem \eqref{opt_feas} admits infinitely many solutions $(P,Q)$ satisfying \eqref{ric_cond} and \eqref{gain_cond} when $\mathrm{rank}(B^{\top}) \neq n$, which is a fundamental property of inverse reinforcement learning \cite{lian2022inverse,lewis2012optimal}.
\end{remark}

Although the inverse RL solution is characterized by \eqref{convex_cond}, the feasibility of \eqref{opt_feas} depends on the accuracy of the least-squares estimates \eqref{est_target_dyna} and \eqref{est_target_gain}. Consequently, a feasible solution may not exist when these estimates are inaccurate. To address this issue, the optimization problem will be modified in to improve robustness with respect to estimation errors in the target dynamics and controller gain.
Thus, the constraint in \eqref{convex_cond} is relaxed by incorporating it into the objective function, leading to the following optimization problem:
\begin{equation}\label{opt_invRL_nominal}
\begin{aligned}
& \underset{P,Q}{\mathrm{minimize}}
& & w_1 \lVert f_1(P,Q) \rVert_F^2 + w_2 \lVert f_2(P) \rVert_F^2 \\
& \mathrm{subject\ to}
& &
Q \succeq 0,\quad P \succ 0, \\
& & &
\hat{K}_{e,1} \ \text{satisfies \eqref{est_target_gain}}, \quad
\hat{F}_{e,1} \ \text{satisfies \eqref{est_target_dyna}} .
\end{aligned}
\end{equation}
Here, $w_1$ and $w_2$ are positive scalar weights that balance the relative importance of the objective terms, $f_1(P,Q)$ and $f_2(P)$ correspond to the relaxed constraint in \eqref{convex_cond} and the optimality condition \eqref{gain_cond}, respectively, defined as
\begin{align}
    f_1(P,Q) &= P - Q + (\hat{K}_{e,1})^{\top} R \hat{K}_{e,1} + (\hat{F}_{e,1})^{\top} P \hat{F}_{e,1}\\
    f_2(P) &= \left(R + B^{\top} P B\right)\hat{K}_{e,1} + B^{\top} P A
\end{align}
The optimization problem \eqref{opt_invRL_nominal} is convex and can therefore be solved using standard convex programming techniques. If the objective of Problem~\ref{prop1} is solely to recover a cost matrix equivalent to $Q_e$, the procedure may terminate at this stage. However, to additionally obtain a stabilizing controller gain $K_{1}^{\star}$ derived from $Q$ that is close to the expert gain $K_{e,1}$, further steps are required. This setting corresponds to the inverse RL imitation problem studied in \cite{xue2021inverseQlearning}.

\begin{problem}\label{prop2}
Given Assumptions~\ref{ass1}–\ref{ass2}, select $R \in \mathbb{R}^{m \times m} \succ 0$ and determine a cost matrix $Q$ equivalent to $Q_e$ such that the resulting optimal controller gain coincides with the expert gain $K_{e,1}$, as defined in Definition~\ref{def1}. Moreover, determine a stabilizing controller gain $K_{1}^{\star}$ obtained from the optimal control problem with cost matrix $Q$ that satisfies $K_{1}^{\star} = K_{e,1}$.
\end{problem}

The optimization problem \eqref{opt_invRL_nominal} alone is insufficient to solve Problem~\ref{prop2}, as it does not explicitly characterize a stabilizing controller gain close to the expert controller $K_{e,1}$. While the estimated gain $\hat{K}_{e,1}$ is available, it does not necessarily stabilize the local system \eqref{nomial_sys}. Consequently, solving the optimal control problem associated with the recovered cost matrix $Q$ is necessary to obtain a stabilizing gain while preserving proximity to $K_{e,1}$.

The LQR problem can be interpreted as determining the tightest lower bound of the cost function \eqref{local_cost}, which can be reformulated as a linear matrix inequality (LMI), as stated in the following lemma.

\begin{lemma}\label{lem_lmi_lqr}\cite{balakrishnan1995connections}
Consider the optimization problem
\begin{equation}\label{LMI}
\begin{aligned}
& \underset{P}{\mathrm{maximize}}
& & \operatorname{tr}(P) \\
& \mathrm{subject\ to}
& &
P \succ 0, \\
& & &
\begin{bmatrix}
    A^\top P A + Q -P & A^\top P B \\
    B^\top P A & B^\top P B + R
\end{bmatrix} \succ 0 .
\end{aligned}
\end{equation}
Then, the optimal solution $P$ satisfies the following properties:
\begin{itemize}
    \item $P$ satisfies the discrete-time algebraic Riccati equation \eqref{P-local}.
    \item The controller gain $K_1 = -\left(R + B^{\top} P B\right)^{-1} B^{\top} P A$ is stabilizing and is the optimal solution to the LQR problem.
\end{itemize}
\end{lemma}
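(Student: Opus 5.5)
The plan is to reduce the LMI to a Riccati inequality with a Schur complement, and then show that the stabilizing solution $P^\star$ of the discrete-time algebraic Riccati equation \eqref{P-local} is the largest element of the feasible set in the Loewner order. One caveat up front: with the strict inequality written in \eqref{LMI}, the feasible set is open. The Riccati solution makes the $(1,1)$ Schur complement exactly zero, so the supremum of $\operatorname{tr}(P)$ is generally not attained. I will therefore work with the closure, replacing the block LMI by $\succeq 0$ while keeping $B^\top P B + R \succ 0$. I read "optimal solution" as the maximizer of this closed problem, which is the supremum of the strict one.

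\textbf{Step 1 (Schur complement).} Since $R \succ 0$ and $P \succ 0$, the block $B^\top P B + R$ is positive definite. The block LMI is therefore equivalent to the Riccati inequality
\begin{equation*}
\mathcal{R}(P) := A^\top P A + Q - P - A^\top P B (R + B^\top P B)^{-1} B^\top P A \succeq 0 .
\end{equation*}
Equivalently, for every $x \in \mathbb{R}^n$,
\begin{equation*}
x^\top P x \le x^\top Q x + u^\top R u + (Ax+Bu)^\top P (Ax+Bu) \quad \text{for all } u,
\end{equation*}
because the right-hand side minus the left is a quadratic form in $(x,u)$ whose minimum over $u$ equals $x^\top \mathcal{R}(P) x$.

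\textbf{Step 2 (feasible $P$ are lower bounds on the cost).} I will use the standing hypotheses: $(A,B)$ controllable and $(A,\sqrt{Q})$ observable, as assumed for the expert problem. Under these, standard LQR theory \cite{lewis2012optimal} gives a unique stabilizing solution $P^\star \succ 0$ of \eqref{P-local}, with $K_1^\star = -(R+B^\top P^\star B)^{-1}B^\top P^\star A$ Schur stable and optimal cost $x_0^\top P^\star x_0$. Take any feasible $P$ and apply the inequality of Step 1 along the optimal closed-loop trajectory $x(k+1) = (A+BK_1^\star)x(k)$, $u(k) = K_1^\star x(k)$. Telescoping gives
\begin{equation*}
x_0^\top P x_0 \le \sum_{k=0}^{N-1}\bigl(x^\top(k) Q x(k) + u^\top(k) R u(k)\bigr) + x(N)^\top P x(N).
\end{equation*}
Stability gives $x(N) \to 0$, so $x_0^\top P x_0 \le x_0^\top P^\star x_0$ for every $x_0$, that is, $P \preceq P^\star$.

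\textbf{Step 3 (attainment and uniqueness).} $P^\star$ is itself feasible, since $\mathcal{R}(P^\star) = 0 \succeq 0$. So $\operatorname{tr}(P) \le \operatorname{tr}(P^\star)$ for all feasible $P$. Now suppose $P$ is any maximizer. Then $P^\star - P \succeq 0$ and $\operatorname{tr}(P^\star - P) = 0$, which forces $P = P^\star$. Hence the optimal $P$ satisfies \eqref{P-local}. Its gain $K_1$ coincides with $K_1^\star$, so it is stabilizing and optimal for the LQR problem; this proves both bullets.

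\textbf{Main obstacle.} The delicate point is Step 2, which must use a trajectory that is already known to be stabilizing, so that the terminal term $x(N)^\top P x(N)$ vanishes. This requires invoking existence of the stabilizing DARE solution, which is where controllability/stabilizability and observability/detectability are essential. The strict-versus-nonstrict issue also has to be handled carefully, either by the closure argument above or by a perturbation $Q + \epsilon I$ with $\epsilon \to 0$.
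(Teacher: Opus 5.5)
Your argument is correct and is the discrete-time version of the Balakrishnan--Vandenberghe lower-bound (dissipation-inequality) argument that the paper only invokes by citation. Every LMI-feasible $P$ gives a lower bound $x_0^\top P x_0$ on the optimal cost $x_0^\top P^\star x_0$, so $P \preceq P^\star$, and the trace objective then forces $P = P^\star$.

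Both of your caveats are also right and go beyond the paper's one-line proof. With the strict LMI as written, the Riccati solution is excluded because its Schur complement vanishes, so a maximizer exists only for the closed problem. In addition, a stabilizing $P^\star \succ 0$ needs a controllability/observability-type hypothesis on $(A,B)$ and on the local $(A,\sqrt{Q})$, which the lemma does not state.
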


\begin{proof}
The proof of the continuous-time counterpart of this result is provided in \cite{balakrishnan1995connections}. The discrete-time case follows by analogous arguments.
\end{proof}

To enforce inverse optimality and closed-loop stability simultaneously, \eqref{opt_invRL_nominal} and \eqref{LMI} can be merged into the single convex optimization problem
\begin{equation}\label{opt_invRL_nominal_stab}
\begin{aligned}
\underset{P,Q}{\mathrm{maximize}} \quad &  \operatorname{tr}(P) \\
\mathrm{subject\ to} \quad & Q \succeq 0, \quad P \succ 0, \\
&
\begin{bmatrix}
    A^\top P A + Q- P & A^\top P B \\
    B^\top P A & B^\top P B + R
\end{bmatrix} \succ 0, \\
&\hat{K}_{e,1} \ \text{satisfies \eqref{est_target_gain}},\\
&f_2(P) = 0\\
&f_1(P,Q) = 0  
\end{aligned}
\end{equation}

\begin{theorem}\label{thm_model_based_merge}
Assume that the estimates are exact, i.e., $\hat{K}_{e,1} = K_{e,1}$ and $\hat{F}_{e,1} = F_{e,1}$. If $(P^\star,Q^\star)$ is a feasible pair of \eqref{opt_invRL_nominal_stab}, then $Q^\star$ is equivalent to $Q_e$ and
\begin{equation}
K_{1}^{\star}= -\left(R + B^{\top} P^\star B\right)^{-1} B^{\top} P^\star A
= K_{e,1}
\end{equation}
solves Problem~\ref{prop2}.
\end{theorem}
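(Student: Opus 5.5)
The plan is to read off the two equality constraints $f_1(P^\star,Q^\star)=0$ and $f_2(P^\star)=0$ of \eqref{opt_invRL_nominal_stab}, with $\hat K_{e,1}=K_{e,1}$ and $\hat F_{e,1}=F_{e,1}$, as the pair \eqref{convex_cond}--\eqref{gain_cond}. Reversing the substitution that produced \eqref{convex_cond}, I will show that they are equivalent to \eqref{ric_cond}--\eqref{gain_cond}. I will then argue that $P^\star$ is the stabilizing (hence optimal) solution of the Riccati equation for the cost $(Q^\star,R)$, so that the LQR gain for $(Q^\star,R)$ is exactly $K_{e,1}$. Throughout I read $f_1$ as the residual of \eqref{convex_cond}, namely $P-Q-K_{e,1}^\top RK_{e,1}-F_{e,1}^\top PF_{e,1}$. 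This is the sign convention that makes $f_1=0$ coincide with \eqref{convex_cond}.

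\textbf{Step 1 (gain and Riccati equation).} Since $R\succ0$ and $P^\star\succ0$, the matrix $R+B^\top P^\star B$ is invertible. Then $f_2(P^\star)=0$ gives directly
\[
K_1^\star:=-(R+B^\top P^\star B)^{-1}B^\top P^\star A=K_{e,1}.
\]
Next I expand $F_{e,1}^\top P^\star F_{e,1}$ with $F_{e,1}=A+BK_{e,1}$ and substitute $B^\top P^\star A=-(R+B^\top P^\star B)K_{e,1}$. This yields
\[
K_{e,1}^\top RK_{e,1}+F_{e,1}^\top P^\star F_{e,1}=A^\top P^\star A-A^\top P^\star B(R+B^\top P^\star B)^{-1}B^\top P^\star A.
\]
Combined with $f_1=0$, this shows that $P^\star$ satisfies the discrete-time algebraic Riccati equation \eqref{P-local} with $Q=Q^\star$.

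\textbf{Step 2 (stability).} The closed loop under $K_1^\star$ is $A+BK_1^\star=F_{e,1}$, which is the expert closed-loop matrix. By the standing assumptions that $(A,B)$ is controllable and $(A,\sqrt{Q_e})$ is observable, standard LQR theory \cite{lewis2012optimal} shows that $F_{e,1}$ is Schur. Hence $K_1^\star$ is stabilizing for the local system \eqref{nomial_sys}. As a consistency check, $P^\star-F_{e,1}^\top P^\star F_{e,1}=Q^\star+K_{e,1}^\top RK_{e,1}\succeq0$ with $P^\star\succ0$ is compatible with this, but by itself would only give marginal stability. That is why I invoke the expert's Schur property rather than the Lyapunov inequality.

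\textbf{Step 3 (optimality and equivalence) — main obstacle.} The delicate point is that a Riccati equation may have several symmetric solutions, so I must show that $P^\star$ is the one that yields the optimal LQR gain for $(Q^\star,R)$. My approach is as follows. For any stabilizing gain $K$, the cost from $x_0$ equals $x_0^\top P_Kx_0$, where $P_K$ solves the Lyapunov equation $P_K=Q^\star+K^\top RK+(A+BK)^\top P_K(A+BK)$. Since $F_{e,1}$ is Schur, the identity $f_1=0$ is a Lyapunov equation with a unique solution, so $P^\star=P_{K_{e,1}}$ is the cost of $K_{e,1}$.

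I then use a completion-of-squares argument with $V(x)=x^\top P^\star x$, relying on the Riccati equation from Step 1 and $R+B^\top P^\star B\succ0$. Along any trajectory with $x(k)\to0$, it gives
\[
\sum_{k\ge0}\big(x^\top Q^\star x+u^\top Ru\big)=x_0^\top P^\star x_0+\sum_{k\ge0}(u-K_{e,1}x)^\top(R+B^\top P^\star B)(u-K_{e,1}x).
\]
So $K_{e,1}$ minimizes the cost among all stabilizing inputs, and $P^\star$ is the stabilizing solution of \eqref{P-local}. If needed, the LMI constraint of \eqref{opt_invRL_nominal_stab} together with Lemma~\ref{lem_lmi_lqr} can be cited as corroboration.

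Therefore, solving the LQR problem \eqref{K-local}--\eqref{u-local} with $(Q^\star,R)$ returns $K_1=K_{e,1}$. By Definition~\ref{def1}, $Q^\star$ is equivalent to $Q_e$, and $K_1^\star=K_{e,1}$ is a stabilizing gain solving Problem~\ref{prop2}.
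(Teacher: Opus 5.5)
Your proof is correct, but its key step follows a different route from the paper's. The paper gets stability and optimality in one stroke by applying Lemma~\ref{lem_lmi_lqr} to the LMI constraint of \eqref{opt_invRL_nominal_stab}. It then reads $K_1^\star=K_{e,1}$ off $f_2(P^\star)=0$, and reads the equivalence of $Q^\star$ off $f_1=0$ by citing that \eqref{convex_cond} is equivalent to \eqref{ric_cond}--\eqref{gain_cond}. You never use the LMI. Instead, you verify that equivalence explicitly and obtain Schur stability of $A+BK_1^\star=F_{e,1}$ from the expert-side LQR assumptions. You then prove optimality directly: uniqueness of the Stein-equation solution identifies $P^\star$ as the cost of $K_{e,1}$. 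Completion of squares then shows that $K_{e,1}$ is the minimizer among inputs with $x(k)\to0$. This settles the multiplicity of Riccati solutions, which the paper leaves implicit. The paper's route is shorter. Yours is self-contained and needs only $P^\star\succ0$ and the two equality constraints.

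Not relying on the LMI is an advantage, so drop your remark that the LMI together with Lemma~\ref{lem_lmi_lqr} could be cited as corroboration. First, Lemma~\ref{lem_lmi_lqr} describes the maximizer of $\operatorname{tr}(P)$, not an arbitrary feasible point. Second, read $f_1$ as the residual of \eqref{convex_cond}, as both you and the paper's proof do. The Schur complement of the LMI block with respect to $R+B^\top PB\succ0$ is then $A^\top PA+Q-P-A^\top PB(R+B^\top PB)^{-1}B^\top PA$. By your Step~1, this is exactly zero whenever $f_1=f_2=0$ hold with exact estimates. Hence the strict inequality $\succ0$ cannot hold at such a point, and \eqref{opt_invRL_nominal_stab} as written has no feasible pair. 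Your argument proves the intended statement whether the LMI is taken non-strict or omitted altogether.
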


\begin{proof}
Let $(P^\star,Q^\star)$ be a feasible point of \eqref{opt_invRL_nominal_stab}. The LMI constraint in \eqref{opt_invRL_nominal_stab} is exactly the feasibility condition in Lemma~\ref{lem_lmi_lqr} with $Q=Q^\star$. Therefore, by Lemma~\ref{lem_lmi_lqr}, the gain
\begin{equation}
K_{1}^{\star}= -\left(R + B^{\top} P^\star B\right)^{-1} B^{\top} P^\star A
\end{equation}
is stabilizing and is the optimal controller for the forward LQR problem associated with $(Q^\star,R)$.

Because $f_2(P^\star)=0$ and $\hat{K}_{e,1}=K_{e,1}$, we have
\begin{equation}
\left(R + B^{\top} P^\star B\right)K_{e,1} + B^{\top} P^\star A = 0.
\end{equation}
Since $R \succ 0$ and $P^\star \succ 0$, the matrix $R + B^{\top} P^\star B$ is nonsingular. Hence,
\begin{equation}
K_{e,1}= -\left(R + B^{\top} P^\star B\right)^{-1} B^{\top} P^\star A = K_{1}^{\star}.
\end{equation}

Moreover, feasibility of \eqref{opt_invRL_nominal_stab} also gives $f_1(P^\star,Q^\star)=0$. Using the exact-estimate assumptions $\hat{K}_{e,1}=K_{e,1}$ and $\hat{F}_{e,1}=F_{e,1}$, this equality becomes
\begin{equation}
P^\star = Q^\star + K_{e,1}^{\top} R K_{e,1} + F_{e,1}^{\top} P^\star F_{e,1},
\end{equation}
which is precisely \eqref{convex_cond}. As discussed after \eqref{opt_feas}, condition \eqref{convex_cond} is equivalent to \eqref{ric_cond} and \eqref{gain_cond}. Therefore, $Q^\star$ is equivalent to $Q_e$ in the sense of Definition~\ref{def1}.
\end{proof}

Although the optimization problem \eqref{opt_invRL_nominal_stab} is theoretically sound, in practice the hard constraint $f_2(P)=0$ is often too restrictive because the estimate $\hat{K}_{e,1}$ of $K_{e,1}$ may be inaccurate. Consequently, the optimization problem may become infeasible. To address this issue, we move this constraint into the objective and minimize the residual $\lVert f_2(P) \rVert_F^2$. Since this changes the original objective, we must reformulate the problem in a way that still preserves the inverse-RL interpretation. Observe that \eqref{LMI} without its objective function is a feasibility problem whose solutions yield a stabilizing gain that can be represented within the standard LQR framework \eqref{K-expert}--\eqref{P-expert}. This leads to the following optimization problem:

\begin{equation}\label{opt_invRL_nominal_stab2}
\begin{aligned}
\underset{P}{\mathrm{minimize}} \quad &  \lVert f_2(P) \rVert_F^2 \\
\mathrm{subject\ to} \quad &   P \succ 0,\\
&
\begin{bmatrix}
    A^\top P A - P & A^\top P B \\
    B^\top P A & B^\top P B + R
\end{bmatrix} \succeq 0, \\
&\hat{K}_{e,1} \ \text{satisfies \eqref{est_target_gain}}.
\end{aligned}
\end{equation}

The corresponding candidate cost weight can then be parameterized as
\begin{equation}\label{Q_from_P}
    Q(P) = -\left(A^\top P A - P - A^\top P B \left(B^\top P B + R\right)^{-1} B^\top P A \right)
\end{equation}
To clarify how \eqref{opt_invRL_nominal_stab2} is obtained from the relaxed inverse-RL objective, we introduce the following theorem:
\begin{theorem}\label{lem_relaxed_P_reduction}
Consider the relaxed version of Problem~\ref{prop2} obtained by replacing the hard constraint $K_{1}^{\star} = K_{e,1}$ with the residual minimization $\lVert f_2(P) \rVert_F^2$. Then the optimization can be written in the single variable $P$ as \eqref{opt_invRL_nominal_stab2}. If $P^\star$ is an optimizer of \eqref{opt_invRL_nominal_stab2}, then the associated candidate cost matrix is given by
\begin{equation}\label{Q_star}
        Q^\star = -A^\top P^\star A + P^\star + A^\top P^\star B \left(B^\top P^\star B + R\right)^{-1} B^\top P^\star A .
        \end{equation}
\end{theorem}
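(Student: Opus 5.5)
The plan is to start from the relaxed problem in the joint variables $(P,Q)$ and eliminate $Q$ by exploiting the Riccati structure. Once the hard constraint $f_2(P)=0$ is replaced by minimizing $\lVert f_2(P)\rVert_F^2$, the relaxed problem reads: minimize $\lVert f_2(P)\rVert_F^2$ over $(P,Q)$ subject to $Q\succeq 0$, $P\succ 0$, and the requirement that $P$ be the Riccati solution associated with $(Q,R)$. That requirement is what keeps the LQR interpretation of Lemma~\ref{lem_lmi_lqr} and Definition~\ref{def1}. The objective does not involve $Q$. The remaining task is to show that the feasible set, projected onto $P$, is exactly the feasible set of \eqref{opt_invRL_nominal_stab2}, and that the $Q$ paired with any such $P$ is \eqref{Q_from_P}.

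\textbf{Step 1 (forward direction).} Take a feasible pair $(P,Q)$ with $P$ satisfying \eqref{P-local}. Solving \eqref{P-local} for $Q$ gives $Q=Q(P)$ in \eqref{Q_from_P}. So $Q$ is uniquely determined by $P$, which justifies \eqref{Q_star}. Next I translate $Q(P)\succeq 0$ into an LMI in $P$ alone. Since $R\succ 0$ and $P\succ 0$, the block $B^\top P B+R$ is positive definite. The Schur complement of this block in
\begin{equation*}
M(P)=\begin{bmatrix} A^\top P A - P & A^\top P B \\ B^\top P A & B^\top P B + R\end{bmatrix}
\end{equation*}
is $A^\top P A - P - A^\top P B(B^\top P B+R)^{-1}B^\top P A = -Q(P)$. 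The Schur complement lemma then gives $M(P)\succeq 0$ if and only if $-Q(P)\succeq 0$.

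\textbf{Step 2 (the sign issue).} Step~1 shows that the constraint as printed, $M(P)\succeq 0$, encodes $Q(P)\preceq 0$ rather than $Q(P)\succeq 0$. I expect reconciling this sign to be the main obstacle. My plan is to read the LMI in the convention of Lemma~\ref{lem_lmi_lqr}: there the block $A^\top P A+Q-P$ appears, and setting $Q=Q(P)$ makes that block's Schur complement identically zero. Under this reading the constraint in \eqref{opt_invRL_nominal_stab2} should be understood, up to the sign convention on the $(1,1)$ block, as $\operatorname{diag}(-1,1)$-type positivity of the Riccati residual, i.e.\ $Q(P)\succeq 0$. I would state this explicitly and carry out the Schur complement argument in that form. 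Either way, the key point survives: the constraint set depends only on $P$, and $Q$ is recovered affinely-in-Schur form by \eqref{Q_from_P}.

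\textbf{Step 3 (converse and stability).} Conversely, for any feasible $P$ of \eqref{opt_invRL_nominal_stab2}, I define $Q=Q(P)$. The pair $(P,Q(P))$ then satisfies \eqref{P-local} by construction, with $Q(P)\succeq 0$ and $P\succ 0$. This gives a bijection between the feasible sets under which the objective values agree, so the two optima correspond and $Q^\star=Q(P^\star)$. For stability of $K_1=-(R+B^\top P^\star B)^{-1}B^\top P^\star A$, I would use the standard Lyapunov identity $P^\star = Q^\star + K_1^\top R K_1 + (A+BK_1)^\top P^\star (A+BK_1)$ with $P^\star\succ 0$. Detectability is only needed for stability, not for the reduction claimed in the statement, so I would remark on it without relying on it.
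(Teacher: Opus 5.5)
Your Step~1 elimination of $Q$ through the Riccati equation, which gives $Q=Q(P)$ from \eqref{Q_from_P} and hence \eqref{Q_star}, is the paper's argument. Your Schur-complement computation is also correct: with $S(P):=B^\top PB+R\succ 0$, the LMI printed in \eqref{opt_invRL_nominal_stab2} is equivalent to $-Q(P)\succeq 0$. The paper's proof never makes this check; it only asserts that the remaining constraints are the LQR feasibility conditions from the Schur complement. So your diagnosis of the sign problem is right.

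The gap is in Step~2, where you try to read that LMI as encoding $Q(P)\succeq 0$. No such reading exists:
\begin{itemize}
\item Negating the $(1,1)$ block gives the Schur complement $P-A^\top PA-A^\top PBS(P)^{-1}B^\top PA$. This differs from $Q(P)=P-A^\top PA+A^\top PBS(P)^{-1}B^\top PA$ in the sign of the last term.
\item A congruence with $\mathrm{diag}(I,-I)$ leaves the Schur complement unchanged.
\item The block of Lemma~\ref{lem_lmi_lqr} with $Q=Q(P)$ substituted has Schur complement identically zero. It therefore imposes no constraint on $P$, and its strict form is infeasible.
\end{itemize}
The obstruction is structural. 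In the Loewner order, $A^\top PA-A^\top PBS(P)^{-1}B^\top PA=\min_K\{(A+BK)^\top P(A+BK)+K^\top RK\}$, so this term is matrix-concave in $P$ and $P\mapsto Q(P)$ is matrix-convex. Hence $\{P:Q(P)\preceq 0\}$ is convex; it is exactly your LMI. By contrast, $\{P\succ 0:Q(P)\succeq 0\}$ is a reverse-convex set. It is in general not convex and cannot be described by any LMI in $P$.

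As a result, Step~3 is false as written. A feasible $P$ of \eqref{opt_invRL_nominal_stab2} satisfies $Q(P)\preceq 0$, which contradicts your own Step~1 conclusion if you then assert $Q(P)\succeq 0$. The two feasible sets $\{P\succ 0:Q(P)\succeq 0\}$ and $\{P\succ 0:Q(P)\preceq 0\}$ meet only where $Q(P)=0$, so there is no bijection.

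Your stability remark fails for the same reason. The identity $P^\star=Q^\star+K_1^\top RK_1+(A+BK_1)^\top P^\star(A+BK_1)$ gives $(A+BK_1)^\top P^\star(A+BK_1)\preceq P^\star$ only when $Q^\star+K_1^\top RK_1\succeq 0$. For example, with $A=2$, $B=1$, $R=1$, the printed constraint admits exactly $0<P\le 3$. The resulting closed loop $A+BK(P)=2/(P+1)$ is not Schur stable for $P\le 1$.

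What can be proved is what the paper's proof actually establishes, namely the reduction itself:
\begin{itemize}
\item Riccati-consistent pairs $(P,Q)$ are parameterized by $P$ alone through \eqref{Q_from_P}.
\item The objective depends only on $P$. Indeed, $f_2(P)=S(P)(\hat{K}_{e,1}-K(P))$ with $K(P)=-S(P)^{-1}B^\top PA$.
\item The optimizer's cost matrix is \eqref{Q_star}.
\end{itemize}
Either stop there, or state plainly that the constraint set of \eqref{opt_invRL_nominal_stab2} is $\{P\succ 0:Q(P)\preceq 0\}$. You cannot keep both the LMI and the conclusion $Q^\star\succeq 0$.
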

\begin{proof}
Let 
\[S(P):=R+B^\top P B \quad K(P):=-S(P)^{-1}B^\top P A. \] 
Since $R\succ0$ and $P\succ0$, we have $S(P)\succ0$ for every feasible $P$.

For any admissible pair $(P,Q)$ in Problem~\ref{prop2}, the Riccati equation gives
\begin{equation}
P = Q + A^\top P A - A^\top P B S(P)^{-1} B^\top P A,
\end{equation}
so $Q$ is uniquely determined by $P$ as
\begin{equation}
Q(P)=P-A^\top P A + A^\top P B S(P)^{-1} B^\top P A.
\end{equation}
Thus the search over $(P,Q)$ reduces to a search over $P$, and substituting $P=P^\star$ yields \eqref{Q_star}.

Moreover,
\begin{equation}
f_2(P)=\left(R+B^\top P B\right)\hat{K}_{e,1}+B^\top P A = S(P)\bigl(\hat{K}_{e,1}-K(P)\bigr).
\end{equation}
Hence, $\lVert f_2(P)\rVert_F^2$ is exactly the relaxed gain-matching residual, while the remaining constraints are precisely the LQR feasibility conditions obtained from the Schur complement and $P \succ 0$. Therefore, the relaxed Problem~\ref{prop2} takes the form \eqref{opt_invRL_nominal_stab2}.
\end{proof}

To conclude, the general pipeline of the proposed model-based approach for solving the inverse RL problem in Problem~\ref{prop2} is summarized in Algorithm~\ref{Alg 1}.

\begin{algorithm}
    \caption{Model-Based Inverse RL Imitation Control}\label{Alg 1}
\begin{algorithmic}[1]
\State Select arbitrary matrix $R \succ 0$ 
\State Obtain $\hat{K}_{e,1}$ from \eqref{est_target_gain}.
\State Solve for $P^\star$ using the optimization \eqref{opt_invRL_nominal_stab2}.
\State Return the cost matrix $Q^\star$ computed from \eqref{Q_from_P} and the controller gain $K_{1}^{\star} = -\left(R + B^{\top} P^\star B\right)^{-1} B^{\top} P^\star A$.
\end{algorithmic}
\end{algorithm}

\section{Data-Driven Inverse RL}

Algorithm~\ref{Alg 1} provides a general pipeline for solving the inverse RL imitation control problem. However, this approach requires explicit knowledge of the system dynamics $(A,B)$, which constitutes a strong and often impractical assumption. In this section, a model-free implementation of Algorithm~\ref{Alg 1} is developed, relying solely on data trajectories from the local and expert systems. Moreover, the proposed framework is extended to accommodate local systems with model uncertainty.

\subsection{Data-Driven Inverse RL for Nominal Local Systems}\label{Data-Driven IRL}

Consider the local system driven by an arbitrary input signal:
\begin{equation}\label{offpolicy_sys}
    x(k+1) = A x(k) + B \bar{u}(k).
\end{equation}
For any matrix $P \in \mathbb{R}^{n \times n}$, the quadratic form of the next state can be expanded as
\begin{equation}\label{iden_eq}
\begin{aligned}
       &x(k+1)^\top P x(k+1) \\
       &= x(k)^\top A^\top P A x(k) + x(k)^\top A^\top P B \bar{u}(k) \\
       & + \bar{u}(k)^\top B^\top P A x(k) + \bar{u}(k)^\top B^\top P B \bar{u}(k).
\end{aligned}
\end{equation}
Define the kernel matrix $H \in \mathbb{R}^{(n+m) \times (n+m)}$ associated with $P$ as
\begin{equation}\label{kernel}
    H =
    \begin{bmatrix}
        H_{xx} & H_{xu} \\
        H_{xu}^\top & H_{uu}
    \end{bmatrix}
    =
    \begin{bmatrix}
        A^\top P A & A^\top P B \\
        B^\top P A & B^\top P B
    \end{bmatrix}.
\end{equation}
Substituting \eqref{kernel} into \eqref{iden_eq} yields
\begin{equation}\label{data_driven form}
    x(k+1)^\top P x(k+1)
    =
    \begin{bmatrix}
      x(k) \\
      \bar{u}(k)
    \end{bmatrix}^\top
    H
    \begin{bmatrix}
      x(k) \\
      \bar{u}(k)
    \end{bmatrix}.
\end{equation}
Equation \eqref{data_driven form} can be expressed directly in terms of data by introducing the regressed kernel matrix $\hat{H}$:
\begin{equation}\label{model-free-kernel}
        X^{k+1\top} P X^{k+1}
        =
        \begin{bmatrix}
      X^{k} \\
      U^{k}
    \end{bmatrix}^\top
    \hat{H}
    \begin{bmatrix}
      X^{k} \\
      U^{k}
    \end{bmatrix},
\end{equation}
where
\begin{align}
\label{local_data_u}
    U^{k} &= [\bar{u}(k), \ldots, \bar{u}(k+N_{\mathrm{d}}-1)], \\
    \label{local_data_x}
    X^{k} &= [x(k), \ldots, x(k+N_{\mathrm{d}}-1)], \\
    \label{local_data_x_plus}
    X^{k+1} &= [x(k+1), \ldots, x(k+N_{\mathrm{d}})].
\end{align}
Here, $N_{\mathrm{d}} > (n+m)(n+m+1)/2$ denotes the number of data groups.
By replacing all terms involving the unknown system dynamics in Algorithm~\ref{Alg 1} with the regressed kernel matrix $\hat{H}$, the merged optimization \eqref{opt_invRL_nominal_stab} can be reformulated in a model-free manner as


\begin{equation}\label{opt_invRL_nominal_modelfree}
\begin{aligned}
\underset{P,\hat{H}}{\mathrm{minimize}} \quad &\left\| (\hat{H}_{uu}+R)\hat{K}_{e,1} + \hat{H}_{xu}^\top \right\|_F^2 \\
\mathrm{subject\ to} \quad & P\succ 0,\\
    &\begin{bmatrix}
        \hat{H}_{xx}- P & \hat{H}_{xu} \\
        \hat{H}_{xu}^\top & \hat{H}_{uu}+R
    \end{bmatrix} \succeq 0, \\
    &P,\hat{H} \ \text{satisfy \eqref{model-free-kernel}}, \\
    &\hat{K}_{e,1} \ \text{satisfies \eqref{est_target_gain}}.
\end{aligned}
\end{equation}
For any optimizer $(P^\star,\hat{H}^\star)$ of \eqref{opt_invRL_nominal_modelfree}, the resulting controller gain and state-weighting matrix are given by
\begin{equation}\label{Q_K_from_P}
\begin{aligned}
&K_{1}^{\star} = -(\hat{H}_{uu}^\star+R)^{-1} \hat{H}_{xu}^{\star\top},\\
&Q^\star = -\hat{H}_{xx}^\star + P^\star + \hat{H}_{xu}^\star(\hat{H}_{uu}^\star+R)^{-1} \hat{H}_{xu}^{\star\top}
\end{aligned}
\end{equation}
where $P^\star$ and $\hat{H}^\star$ are obtained directly from \eqref{opt_invRL_nominal_modelfree}.

\begin{lemma}\label{prop_model_free_exact}
Assume that the data generated by \eqref{offpolicy_sys} are noise free and that
\begin{equation}
\begin{bmatrix}
X^{k} \\
U^{k}
\end{bmatrix}
\begin{bmatrix}
X^{k} \\
U^{k}
\end{bmatrix}^{\top}
\end{equation}
is nonsingular. Then every pair $(P,\hat{H})$ satisfying \eqref{model-free-kernel} also satisfies $\hat{H}=H$, where $H$ is defined by \eqref{kernel}. Consequently, \eqref{opt_invRL_nominal_modelfree} is an exact off-policy reformulation of \eqref{opt_invRL_nominal_stab}.
\end{lemma}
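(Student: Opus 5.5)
The plan is to reduce \eqref{model-free-kernel} to a homogeneous matrix equation in the difference $\hat{H}-H$ and then use the nonsingularity assumption to cancel the data matrices. Write $Z := \begin{bmatrix} X^{k} \\ U^{k} \end{bmatrix} \in \mathbb{R}^{(n+m)\times N_{\mathrm{d}}}$ for the stacked local data. The steps, in order, are as follows.

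\textbf{Step 1 (model side).} Since the data are noise free and generated by \eqref{offpolicy_sys}, every column satisfies $x(k+i+1) = [A\ B]\,[x(k+i);\bar{u}(k+i)]$. Hence $X^{k+1} = [A\ B]\,Z$. Consequently, for the true kernel \eqref{kernel},
\begin{equation*}
X^{k+1\top} P X^{k+1} = Z^\top [A\ B]^\top P [A\ B] Z = Z^\top H Z .
\end{equation*}
This is the matrix (all-pairs) version of \eqref{data_driven form}.

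\textbf{Step 2 (data side).} If $(P,\hat{H})$ satisfies \eqref{model-free-kernel}, subtracting the identity of Step 1 gives $Z^\top(\hat{H}-H)Z = 0$. Multiplying on the left by $Z$ and on the right by $Z^\top$ yields
\begin{equation*}
(ZZ^\top)(\hat{H}-H)(ZZ^\top) = 0 .
\end{equation*}
By hypothesis $ZZ^\top$ is nonsingular, so multiplying by its inverse on both sides gives $\hat{H} = H$. Symmetry of $\hat{H}$ is not even needed for this step, because the equation is imposed for all column pairs.

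\textbf{Step 3 (equivalence).} With $\hat{H}=H$, i.e.\ $\hat{H}_{xx}=A^\top PA$, $\hat{H}_{xu}=A^\top PB$ and $\hat{H}_{uu}=B^\top PB$, the objective of \eqref{opt_invRL_nominal_modelfree} becomes exactly $\lVert (R+B^\top PB)\hat{K}_{e,1} + B^\top PA\rVert_F^2 = \lVert f_2(P)\rVert_F^2$. Likewise, the LMI becomes the LMI of \eqref{opt_invRL_nominal_stab2}. Thus the feasible sets and objective values coincide under the bijection $P \mapsto (P, H(P))$. Substituting $\hat{H}^\star = H(P^\star)$ into \eqref{Q_K_from_P} then reproduces \eqref{Q_from_P} and the gain $K_1^\star$ of Algorithm~\ref{Alg 1}. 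Note that the problem actually matched is the relaxed single-variable form \eqref{opt_invRL_nominal_stab2}, which Theorem~\ref{lem_relaxed_P_reduction} identifies as the relaxed version of \eqref{opt_invRL_nominal_stab}. I would state the equivalence with that identification made explicit.

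\textbf{Main obstacle.} The delicate point is the interpretation of \eqref{model-free-kernel}. If it is read as the full $N_{\mathrm{d}}\times N_{\mathrm{d}}$ matrix identity, Steps 1 and 2 go through cleanly as above. If instead only the diagonal entries are regressed, which the count $N_{\mathrm{d}} > (n+m)(n+m+1)/2$ suggests, then the cancellation argument fails. In that case I would restrict $\hat{H}$ to be symmetric and vectorize, writing $z_i^\top(\hat{H}-H)z_i = \bar{z}_i^\top \mathrm{vech}(\hat{H}-H)$ with quadratic basis vectors $\bar{z}_i$. The conclusion $\hat{H}=H$ would then require the matrix $[\bar{z}_1,\dots,\bar{z}_{N_{\mathrm{d}}}]$ to have full row rank, which is a persistency-of-excitation condition stronger than invertibility of $ZZ^\top$. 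I would address this with a remark noting which condition is needed under each reading.
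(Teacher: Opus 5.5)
Your proof is correct and follows the paper's argument: write the noise-free data as $X^{k+1}=[A\ B]Z$, subtract $Z^\top H Z$ from \eqref{model-free-kernel}, and use invertibility of $ZZ^\top$ to conclude $\hat{H}=H$. Your sandwich multiplication makes explicit a step the paper only asserts. Two of your remarks are accurate refinements that the paper glosses over. First, the substituted problem is literally \eqref{opt_invRL_nominal_stab2}, not \eqref{opt_invRL_nominal_stab}. Second, the argument needs \eqref{model-free-kernel} read as the full $N_{\mathrm{d}}\times N_{\mathrm{d}}$ identity; the per-sample scalar reading requires the stronger rank condition of Remark~\ref{remark:persistent_excitation}.
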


\begin{proof}
From \eqref{offpolicy_sys}, the left-hand side of \eqref{model-free-kernel} equals
\begin{equation}
\begin{bmatrix}
X^{k} \\
U^{k}
\end{bmatrix}^{\top}
\begin{bmatrix}
    A^\top P A & A^\top P B \\
    B^\top P A & B^\top P B
\end{bmatrix}
\begin{bmatrix}
X^{k} \\
U^{k}
\end{bmatrix}.
\end{equation}
Subtracting this expression from \eqref{model-free-kernel} gives
\begin{equation}
\begin{bmatrix}
X^{k} \\
U^{k}
\end{bmatrix}^{\top}
\left(\hat{H} -
\begin{bmatrix}
    A^\top P A & A^\top P B \\
    B^\top P A & B^\top P B
\end{bmatrix}
\right)
\begin{bmatrix}
X^{k} \\
U^{k}
\end{bmatrix}=0.
\end{equation}
The nonsingularity of the regressor Gramian implies that the only matrix satisfying the above identity is the zero matrix. Therefore $\hat{H}=H$ with $H$ defined by \eqref{kernel}, and substituting the blocks of $\hat{H}$ into \eqref{opt_invRL_nominal_modelfree} recovers \eqref{opt_invRL_nominal_stab} exactly.
\end{proof}

\begin{remark}\label{remark:persistent_excitation}
The optimization problem \eqref{opt_invRL_nominal_modelfree} is a semidefinite program and hence convex. The constraint \eqref{model-free-kernel} must yield a sufficient number of independent linear equations to uniquely identify the regressed kernel matrix $\hat{H}$. Specifically, the data matrix $[X^{k\top}, U^{k\top}] \otimes [X^{k\top}, U^{k\top}]$ must have rank at least $(n+m)(n+m+1)/2$. This condition is satisfied when the input signal $\bar{u}(k)$ persistently excites the system \eqref{offpolicy_sys}.
\end{remark}

\begin{remark}
The block LMI embedded in \eqref{opt_invRL_nominal_modelfree} can be interpreted as the primal representation of the dual approaches in \cite{farjadnasab2022model,lee2018primal}, as established in \cite{balakrishnan1995connections}.
\end{remark}

Algorithm~\ref{Alg 2} summarizes the complete model-free implementation of the proposed inverse RL framework.

\begin{algorithm}
    \caption{Model-Free Inverse RL Imitation Control}\label{Alg 2}
\begin{algorithmic}[1]
\State Select an arbitrary matrix $R \succ 0$ and estimate  $\hat{K}_{e,1}$ using  \eqref{est_target_gain}.
\State Collect data groups $(U^k, X^k, X^{k+1})$ from the local system \eqref{offpolicy_sys} driven by $\bar{u}(k)$ satisfying persistent excitation.
\State Solve for $(P,\hat{H})$ using \eqref{opt_invRL_nominal_modelfree}.
\State Return the cost matrix $Q^\star$ computed from \eqref{Q_K_from_P} and the controller gain $K_{1}^{\star} = -\left(R + \hat{H}_{uu}^\star\right)^{-1} \hat{H}_{xu}^{\star\top}$.
\end{algorithmic}
\end{algorithm}
\begin{remark}
The proposed algorithm is non-iterative and consists of a single convex optimization problem. Unlike iterative inverse RL methods \cite{xue2021inverse,xue2021inverseQlearning,lian2022inverse,wu2026output}, which involve repeated matrix inversions and are therefore susceptible to numerical instability, the proposed formulation avoids error accumulation across iterations. This structural property improves numerical robustness and reduces computational complexity. Furthermore, the method eliminates the need for an initial stabilizing gain, which is required in \cite{lian2022inverse} and may be impractical to obtain in practice.
\end{remark}

\subsection{Data-Driven Inverse RL for Local Systems with Model Uncertainty}\label{Data-Driven IRL_pertubed}

In practical scenarios, expert data are often collected from a system that shares a similar structure with the local system but is not identical. Such discrepancies may arise due to model degradation, unmodeled dynamics, or differences across production versions. As a result, it is reasonable to assume that the local system exhibits model uncertainty when compared with the expert system. Specifically, the local system dynamics are given by
\begin{equation}\label{uncer_sys}
    x(k+1) = (A + \gamma_1BD)x(k) + \gamma_2 Bu(k),
\end{equation}
where $D \in \mathbb{R}^{m \times n}$ is an unknown model perturbation matrix, and $\gamma_1, \gamma_2 > 0$ are unknown scaling factors.

In this setting, our goal is to recover a target controller gain $K_{\mathrm{tar},2}$ such that the closed-loop dynamics of the uncertain system \eqref{uncer_sys} coincide with the expert closed-loop dynamics \eqref{target_dyna}, i.e.,
\begin{equation}
A + \gamma_1 BD + \gamma_2 B K_{\mathrm{tar},2} = F_{e,1}.
\end{equation}

Under model uncertainty, the target controller recovered from data may fail to be inverse-optimal for a standard LQR cost of the form \eqref{local_cost}. To enlarge the class of admissible inverse-optimal controllers, we therefore adopt the generalized LQR cost as in \cite{heij2007introduction}
\begin{equation}\label{local_cost_cross}
\begin{aligned}
&V_N(x(k),u(k)) =\\
& \sum_{k=0}^{\infty} \left(x^{\top}(k) Q x(k) + 2 x^{\top}(k) N^{\top} u(k) + u^{\top}(k) R u(k)\right),
\end{aligned}
\end{equation}
where $N \in \mathbb{R}^{m \times n}$ and that the block matrix satisfied
\begin{equation}
\label{block_SMD}
\begin{bmatrix}
Q & N^{\top} \\
N & R
\end{bmatrix} \succeq 0
\end{equation} 
To motivate the use of the generalized LQR framework instead of the standard LQR formulation, we first establish the following lemma.

\begin{lemma}\label{prop_stabilizing_not_inverse_lqr}
Not every stabilizing feedback gain for a linear system can be represented as the optimal controller of a standard discrete-time LQR problem with $Q \succeq 0$ and $R \succ 0$.
\end{lemma}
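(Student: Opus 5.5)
The plan is to prove the statement by an explicit counterexample. It suffices to take a scalar system ($n=m=1$) and show that the set of gains generated by the standard discrete-time LQR formulas \eqref{K-local}--\eqref{P-local} is strictly smaller than the set of stabilizing gains. Take $x(k+1)=a x(k)+b u(k)$ with $a=\tfrac12$ and $b=1$. This pair is controllable, and since $|a|<1$ every $K$ with $|a+bK|<1$ is stabilizing, namely $K\in(-\tfrac32,\tfrac12)$.

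\emph{Step 1: characterize the LQR-optimal gains.} Let $Q\succeq 0$ and $R\succ0$, and let $P$ be the stabilizing solution of \eqref{P-local}. I will argue that $P\ge 0$. This follows because $P$ equals the optimal cost \eqref{V-local} of a nonnegative running cost, so $x^\top P x\ge 0$. Equivalently, it can be read off the scalar Riccati equation $P=Q+a^2PR/(R+b^2P)$. Then \eqref{K-local} gives
\[
K_1=-\frac{abP}{R+b^2P},\qquad a+bK_1=\frac{aR}{R+b^2P}.
\]
Since $R>0$ and $P\ge0$, the factor $R/(R+b^2P)$ lies in $(0,1]$. Hence every LQR-optimal closed-loop pole has the same sign as $a$ and magnitude at most $|a|$. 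For $a=\tfrac12$, $b=1$ this means $K_1=-P/(R+P)\in(-1,0]$, i.e.\ $a+bK_1\in(-\tfrac12,\tfrac12]$ and in fact $a+bK_1\in(0,\tfrac12]$.

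\emph{Step 2: exhibit a stabilizing gain outside this set.} Choose, for example, $K=-1$. The closed loop is $a+bK=-\tfrac12$, which is Schur stable. However, it has the opposite sign to $a$, so it cannot equal $aR/(R+b^2P)$ for any admissible $R>0$ and $P\ge0$. Equivalently, $K=-1\notin(-1,0]$. Therefore this stabilizing gain is not the optimal controller of any standard LQR problem with $Q\succeq0$ and $R\succ0$, which proves the statement.

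\emph{Main obstacle.} The only delicate point is justifying $P\ge0$ for every admissible $(Q,R)$. This includes the degenerate case $Q=0$, where the relevant Riccati solution is $P=0$ and gives $K_1=0$, which is still inside the range above. I would handle it through the value-function interpretation \eqref{V-local} rather than by solving the quadratic explicitly. Note that the cross term $N$ in \eqref{local_cost_cross} breaks exactly this sign argument, because the gain acquires an additional $-R^{-1}N$-type contribution. This is the motivation for the generalized cost, and I would remark on it after the proof.
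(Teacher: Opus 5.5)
Your proof is correct. It uses the same overall strategy as the paper (an explicit counterexample), but a different example and a different obstruction.

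The paper takes the two-dimensional pair $A=0.5I$, $B=[1\;\,0]^\top$, which is stabilizable but not controllable, with $K=-[0\;\,0.25]$. It reads off from the first entry of the stationarity condition that $p_{11}=0$, contradicting $P\succ0$. That argument needs only the first-order condition, not the Riccati equation. However, it assumes $P\succ0$, which is not guaranteed when $Q$ is only positive semidefinite. It can be repaired: $p_{11}=0$ and $P\succeq0$ force $p_{12}=0$, and then the second entry gives $0.25R=0$.

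Your scalar example instead characterizes every LQR closed-loop pole as $aR/(R+b^2P)$. Combined with $P\ge0$ from the value-function interpretation, this gives a clean sign obstruction. Your route has three advantages:
\begin{itemize}
\item it works for a controllable pair, so the phenomenon is not an artifact of an uncontrollable mode;
\item it handles the semidefinite case explicitly;
\item it describes exactly which stabilizing gains are unreachable, which motivates the cross term $N$ well.
\end{itemize}

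One arithmetic slip needs fixing. With $a=\tfrac12$ and $b=1$ one has $K_1=-\tfrac{P}{2(R+P)}\in(-\tfrac12,0]$, not $-\tfrac{P}{R+P}\in(-1,0]$. The range you wrote contradicts your own correct conclusion $a+bK_1\in(0,\tfrac12]$; for instance, $K=-0.9$ would give a negative pole. The argument itself is unaffected, since $K=-1$ lies outside $(-\tfrac12,0]$. In fact, every $K\in(-\tfrac32,-\tfrac12]\cup(0,\tfrac12)$ is stabilizing but not LQR-optimal for this system.
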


\begin{proof}
Suppose that a feedback gain $K$ is optimal for a standard discrete-time LQR problem. Then there exist matrices $P \succ 0$, $Q \succeq 0$, and $R \succ 0$ such that
\begin{equation}\label{stationarity_condition}
(R + B^\top P B)K = B^\top P A.
\end{equation}

We now construct a stabilizing gain $K$ for which no positive definite matrix $P$ can satisfy \eqref{stationarity_condition}.

Consider the system
\[
A = \begin{bmatrix} 0.5 & 0 \\ 0 & 0.5 \end{bmatrix}, 
\quad
B = \begin{bmatrix} 1 \\ 0 \end{bmatrix},
\]
and let
\[
K = -[\,0 \;\; 0.25\,].
\]

The closed-loop matrix is
\[
A + BK = 
\begin{bmatrix} 
0.5 & 0 \\ 
0 & 0.5 
\end{bmatrix}
-
\begin{bmatrix} 
1 \\ 
0 
\end{bmatrix}
[\,0 \;\; 0.25\,]
=
\begin{bmatrix} 
0.5 & -0.25 \\ 
0 & 0.5 
\end{bmatrix},
\]
whose eigenvalues are both equal to $0.5$. Hence, $K$ is stabilizing.

Now assume, for contradiction, that $K$ is LQR-optimal. Then there exists $P \succ 0$ satisfying \eqref{stationarity_condition}. Let
\[
P = \begin{bmatrix} p_{11} & p_{12} \\ p_{12} & p_{22} \end{bmatrix} \succ 0.
\]

Compute
\[
B^\top P B = p_{11},\quad, B^\top P A = [\,0.5\,p_{11} \;\; 0.5\,p_{12}\,].
\]

Substituting into \eqref{stationarity_condition}, we obtain
\[
(R + p_{11})[\,0 \;\; 0.25\,] = [\,0.5\,p_{11} \;\; 0.5\,p_{12}\,].
\]

Equating the first components gives
\[\quad p_{11} = 0.
\]

This contradicts the requirement $P \succ 0$, which implies $p_{11} > 0$. Therefore, no such matrix $P$ exists, and $K$ cannot be an optimal LQR gain for any choice of $Q \succeq 0$ and $R \succ 0$.

Hence, not every stabilizing feedback gain admits a standard inverse-LQR representation.
\end{proof}

This lemma suggests that although the expert controller gain $K_{e,1}$ is optimal for the system \eqref{expert_dynamic} with respect to the cost function \eqref{expert_cost}, no corresponding optimal gain generally exists within the standard LQR framework for the uncertain system \eqref{uncer_sys}. This obstruction is removed by the generalized LQR formulation \eqref{local_cost_cross}. Denoting $\bar{A} := A + \gamma_1 BD$, $\bar{B}= \gamma_2 B$, the associated Riccati equation and optimal gain are
\begin{align}
P &= Q + \bar{A}^{\top} P \bar{A} - \nonumber \\
\label{ric_cond_cross}
&\quad - \left(\bar{A}^{\top} P \bar{B} + N^{\top}\right)
\left(R + \bar{B}^{\top} P \bar{B}\right)^{-1}
\left(\bar{B}^{\top} P \bar{A} + N\right), \\
\label{gain_cond_cross}
K_2 &= -\left(R + \bar{B}^{\top} P \bar{B}\right)^{-1}\left(\bar{B}^{\top} P \bar{A} + N\right).
\end{align}

The next lemma shows that every stabilizing feedback gain for the uncertain system can be realized as the optimal controller of a generalized LQR problem. In particular, there exists a target gain $K_{\mathrm{tar},2}$ such that the closed-loop system \eqref{uncer_sys} matches the target dynamics \eqref{target_dyna}, which are stable.
\begin{lemma}
For every stabilizing feedback gain $K$ of the system \eqref{uncer_sys}, there exists at least one triplet $(Q,R,N)$ satisfying the generalized LQR framework \eqref{local_cost_cross}.
\end{lemma}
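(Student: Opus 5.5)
Given a stabilizing gain $K$ for \eqref{uncer_sys}, we construct $(Q,R,N)$ directly, working with $\bar{A}=A+\gamma_1 BD$ and $\bar{B}=\gamma_2 B$. First fix any $R\succ 0$ (say $R=I_m$) and any $W\succ 0$ (say $W=I_n$). Since $F:=\bar{A}+\bar{B}K$ is Schur stable, the Lyapunov equation
\begin{equation*}
P = F^\top P F + W
\end{equation*}
has a unique solution $P\succ 0$. Next, $N$ is fixed by requiring \eqref{gain_cond_cross} to return $K$:
\begin{equation*}
N := -\left(R+\bar{B}^\top P\bar{B}\right)K - \bar{B}^\top P\bar{A}.
\end{equation*}
This is the key freedom that is missing in Lemma~\ref{prop_stabilizing_not_inverse_lqr}: the cross term absorbs any mismatch between $K$ and $-(R+\bar{B}^\top P\bar{B})^{-1}\bar{B}^\top P\bar{A}$, so no sign or rank obstruction on $P$ appears.

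Next we define $Q$ so that \eqref{ric_cond_cross} holds with this $P$. With $S:=R+\bar{B}^\top P\bar{B}$ we have $\bar{B}^\top P\bar{A}+N=-SK$, so the Riccati equation becomes
\begin{equation*}
Q = P - \bar{A}^\top P\bar{A} + K^\top S K .
\end{equation*}
Expanding $F^\top P F$ and substituting $N$ shows that the stage cost evaluated along $u=Kx$ equals
\begin{equation*}
Q + N^\top K + K^\top N + K^\top R K = P - F^\top P F = W \succ 0,
\end{equation*}
which is the generalized analogue of \eqref{convex_cond}. This is routine algebra that we will carry out explicitly.

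The main obstacle is the sign condition \eqref{block_SMD}, together with confirming that $K$ is the \emph{optimal} stabilizing controller and not merely a stationary point. The block matrix satisfies
\begin{equation*}
\begin{bmatrix} Q & N^\top\\ N & R\end{bmatrix} \succeq 0
\iff Q - N^\top R^{-1} N \succeq 0 .
\end{equation*}
Writing $N=-R K - M$ with $M:=\bar{B}^\top P F$, we get
\begin{equation*}
Q - N^\top R^{-1}N = W + F^\top P \bar{B}\,(\bar{B}^\top P \bar{B})\,\cdots
\end{equation*}
More concretely, the first thing we will try is to compute the Schur complement and show that it equals $W - M^\top (R^{-1}-\ldots)M$. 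If this is not automatically nonnegative, we will use the remaining freedom by scaling: take $R=\rho I$ with $\rho$ large, or choose $W$ large relative to $P$. Since $P$ grows linearly with $W$, the cleaner route is to scale $R$. As $\rho\to\infty$ we have $N^\top R^{-1}N \approx \rho K^\top K + O(1)$, while $Q\approx \rho K^\top K + O(1)$ as well, so the leading terms cancel. One then has to check that the $O(1)$ remainder is $W - M^\top M/\rho + \ldots \succeq 0$, which holds for $\rho$ large.

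Once \eqref{block_SMD} holds, standard generalized-LQR theory \cite{heij2007introduction} applies. The change of variable $v=u+R^{-1}Nx$ reduces the problem to a standard LQR with state weight $Q-N^\top R^{-1}N\succeq 0$ and $R\succ 0$. Optimality of $K$ then follows from the completion-of-squares argument, using the stabilizing solution $P\succ 0$ of \eqref{ric_cond_cross}.
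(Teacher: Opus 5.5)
Your construction of $P$, $N$ and $Q$ is the paper's construction: the Lyapunov solution for $F=\bar A+\bar BK$, $N$ chosen so that \eqref{gain_cond_cross} returns $K$, $Q$ read off from \eqref{ric_cond_cross}, and $\ell(x,Kx)=x^\top Wx$. The two arguments part ways at the sign condition \eqref{block_SMD}.

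The paper keeps $R\succ0$ arbitrary and asserts that positivity of the stage cost along $u=Kx$ yields $Q-N^\top R^{-1}N\succ0$. That inference is invalid, because the minimizer of $\ell(x,\cdot)$ is $u=-R^{-1}Nx$, not $u=Kx$. You are right that the condition is not automatic, and scaling $R$ is the correct repair.

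You can replace your asymptotics by an exact identity. With $M:=\bar B^\top PF$ one has $N+RK=-M$, and
\begin{equation*}
\begin{bmatrix} I & K^\top\\ 0 & I\end{bmatrix}
\begin{bmatrix} Q & N^\top\\ N & R\end{bmatrix}
\begin{bmatrix} I & 0\\ K & I\end{bmatrix}
=\begin{bmatrix} W & -M^\top\\ -M & R\end{bmatrix},
\qquad\text{so}\qquad
Q-N^\top R^{-1}N=W-M^\top R^{-1}M .
\end{equation*}
There are no further terms. The scalar case $\bar A=\tfrac12$, $\bar B=1$, $K=0$, $W=1$ gives $P=\tfrac43$, $M=\tfrac23$, $Q=1$, $N=-\tfrac23$. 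The block matrix is then indefinite for every $R<\tfrac49$, so the paper's ``any $R$'' claim genuinely fails.

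Because $P$ and $M$ do not depend on $R$, the choice $R=\rho I$ with $\rho>\lambda_{\max}(M^\top M)/\lambda_{\min}(W)$ makes the Schur complement positive definite. That strictness also closes your optimality step cleanly. Finite cost then forces $x_k\to0$, and the identity $\sum_{k<T}\ell(x_k,u_k)=x_0^\top Px_0-x_T^\top Px_T+\sum_{k<T}(u_k-Kx_k)^\top S(u_k-Kx_k)$ shows that $u=Kx$ is optimal over all inputs. With merely $\succeq0$ you would additionally need a detectability argument.

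The paper only verifies the first-order gain condition, so your route proves more than the paper's.
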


\begin{proof}
Let $K$ be such that $(\bar{A}-BK)$ is stable. Then for any $W \succ 0$, there exists a unique $P \succ 0$ satisfying the Lyapunov equation:
\begin{equation}
(\bar{A}+BK)^{\top}P(\bar{A}+BK) - P = -W.
\end{equation}
To prove the lemma, we construct a triplet $(Q,N,R)$ such that $K$ satisfies the optimality condition for the generalized LQR framework. Choose an arbitrary $R \succ 0$ and define:
\begin{equation}
\begin{aligned}
S &:= R + B^{\top}PB,\\
N &:= -SK - B^{\top}P\bar{A},\\
Q &:= P - \bar{A}^{\top}P\bar{A} + K^{\top}SK.
\end{aligned}
\end{equation}
By the definition of $N$, it follows that $K = S^{-1}(B^{\top}P\bar{A} + N)$, which is the necessary condition for $K$ to be the optimal gain. Now, consider the stage cost:
\begin{equation}
\ell(x,u) = \begin{bmatrix} x \\ u \end{bmatrix}^{\top} \begin{bmatrix} Q & N^{\top} \\ N & R \end{bmatrix} \begin{bmatrix} x \\ u \end{bmatrix}.
\end{equation}
Evaluating the cost at the optimal policy $u=Kx$ yields:
\begin{equation}
\ell(x,Kx) = x^{\top}(Q - N^{\top}K - K^{\top}N + K^{\top}RK)x.
\end{equation}
Substituting the definitions of $Q$ and $N$ into the expression above, and writing $F_K := \bar{A}+BK$, we recover the Lyapunov difference:
\begin{equation}
\begin{aligned}
Q - N^{\top}K - K^{\top}N + K^{\top}RK
&= P - F_K^{\top} P F_K \\
&= W.
\end{aligned}
\end{equation}
Thus, $\ell(x,Kx) = x^{\top}Wx > 0$ for all $x \neq 0$. Since $R \succ 0$ and the cost is strictly positive at its minimum, the Schur complement condition $Q - N^{\top}R^{-1}N \succ 0$ is satisfied. This implies the block matrix
\begin{equation}
\begin{bmatrix} Q & N^{\top} \\ N & R \end{bmatrix} \succeq 0
\end{equation}
is positive semidefinite. Therefore, $(Q,R,N)$ defines a valid generalized LQR cost for which $K$ is optimal.
\end{proof}

\begin{lemma}\label{lem_cross_term}
There exists a stabilizing gain $K_{\mathrm{tar},2}$ such that the closed-loop matrix of \eqref{uncer_sys} coincides with the expert closed-loop matrix \eqref{target_dyna}.
\end{lemma}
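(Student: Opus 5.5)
The plan is to construct $K_{\mathrm{tar},2}$ explicitly and then check the matching and stability properties directly. The closed-loop requirement is
\begin{equation*}
A + \gamma_1 BD + \gamma_2 B K_{\mathrm{tar},2} = F_{e,1} = A + BK_{e,1}.
\end{equation*}
Subtracting $A$ from both sides reduces it to $B(\gamma_1 D + \gamma_2 K_{\mathrm{tar},2} - K_{e,1}) = 0$. This suggests taking
\begin{equation*}
K_{\mathrm{tar},2} := \gamma_2^{-1}\bigl(K_{e,1} - \gamma_1 D\bigr),
\end{equation*}
which is well defined because the scalar $\gamma_2>0$ is nonzero. Substituting this choice into the closed-loop matrix of \eqref{uncer_sys} gives
\begin{equation*}
\bar{A} + \bar{B}K_{\mathrm{tar},2} = A + \gamma_1 BD + BK_{e,1} - \gamma_1 BD = A + BK_{e,1} = F_{e,1}.
\end{equation*}
The first step of the argument is this direct verification.

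The second step is to show that $K_{\mathrm{tar},2}$ is stabilizing. Its closed-loop matrix equals $F_{e,1}$, so it suffices to show that $F_{e,1}$ is Schur. I would argue this from the expert LQR problem of Section~\ref{Expert_Sys}. By assumption $(A,B)$ is controllable and $(A,\sqrt{Q_e})$ is observable, with $R_e\succ0$. Standard discrete-time LQR theory then says the stabilizing solution $P_e$ of \eqref{P-expert} exists, and the resulting gain $K_{e,1}$ in \eqref{K-expert} makes $A+BK_{e,1}$ Schur stable.

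Alternatively, \eqref{convex_cond} gives the Lyapunov-type identity
\begin{equation*}
P_e - F_{e,1}^\top P_e F_{e,1} = Q_e + K_{e,1}^\top R_e K_{e,1}.
\end{equation*}
Together with observability of $(F_{e,1}, [\sqrt{Q_e};\, R_e^{1/2}K_{e,1}])$, which is inherited from observability of $(A,\sqrt{Q_e})$ under output feedback, this identity yields $\rho(F_{e,1})<1$.

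The only step that needs some care is this stability claim: the Lyapunov right-hand side is merely positive semidefinite, so the observability detail is needed. I would cite \cite{lewis2012optimal} rather than reprove it.

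Finally, I would note how this connects to the preceding lemma. Since $K_{\mathrm{tar},2}$ stabilizes \eqref{uncer_sys}, the preceding lemma applies and gives a triplet $(Q,R,N)$ for which $K_{\mathrm{tar},2}$ is generalized-LQR optimal. This motivates the data-driven formulation that follows.
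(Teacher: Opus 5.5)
Your proof is correct and follows the paper's argument: the same explicit choice $K_{\mathrm{tar},2} = \gamma_2^{-1}(K_{e,1} - \gamma_1 D)$ and the same direct check that the closed-loop matrix reduces to $F_{e,1}$. The only difference is that you justify why $F_{e,1}$ is Schur, using standard LQR theory under controllability/observability or the Lyapunov identity from \eqref{convex_cond} plus observability, whereas the paper simply asserts that $F_{e,1}$ is stable.
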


\begin{proof}
Choose
\begin{equation}\label{cross_term_choice}
K_{\mathrm{tar},2} = \frac{1}{\gamma_2} K_{e,1} - \frac{\gamma_1}{\gamma_2} D.
\end{equation}
Then
\begin{equation}\label{cond_gain2}
A + \gamma_1 BD + \gamma_2 B K_{\mathrm{tar},2}
= A + B K_{e,1}
= F_{e,1}.
\end{equation}
Hence the closed-loop matrix of \eqref{uncer_sys} is exactly $F_{e,1}$. Since $F_{e,1}$ is stable, the gain $K_{\mathrm{tar},2}$ is stabilizing.
\end{proof}

The inverse RL imitation problem is therefore reformulated under the following assumptions and definitions.

\begin{assumption}\label{ass3}
The system matrices $(A,B)$ in \eqref{expert_dynamic} and $(A+\gamma_1,\gamma_2B)$ in \eqref{uncer_sys}, the cost matrices $(Q_e,R_e)$ in \eqref{expert_cost}, and the expert controller gain $K_{e,1}$ in \eqref{K-expert} are all unknown.
\end{assumption}

\begin{assumption}\label{ass4}
The state--input trajectory pairs $(x(k),u(k))$ of the local system \eqref{uncer_sys} are available. The expert state trajectory $x_e(k)$ is available, and the expert input trajectory $u_e(k)$ is also available.
\end{assumption}


\begin{definition}\label{def2}
The pair $(Q,N)$ in \eqref{local_cost_cross} is said to be \emph{equivalent} to the expert cost matrix $Q_e$ in \eqref{expert_cost} if, for any given $R \in \mathbb{R}^{m \times m} \succ 0$, the optimal controller gain $K_2$ obtained by minimizing \eqref{local_cost_cross} subject to the uncertain system \eqref{uncer_sys} yields a closed-loop system identical to that of the expert. That is, the resulting closed-loop matrix satisfies
\begin{equation}
\begin{aligned}\label{cond_prop2}
&A + \gamma_1 BD + \gamma_2 BK_2 = A + BK_{e,1} 
\end{aligned}
\end{equation}
\end{definition}

\begin{problem}\label{prop3}
Given Assumptions~\ref{ass3}–\ref{ass4}, select $R \in \mathbb{R}^{m \times m} \succ 0$ and define the closed-loop dynamics of the uncertain local system \eqref{uncer_sys} as
\begin{equation}
    F_2 = A + B(\gamma_1 D + \gamma_2 K_2).
\end{equation}
Determine a generalized cost pair $(Q,N)$ equivalent to $Q_e$ such that the resulting closed-loop matrix $F_2$ coincides with the expert closed-loop matrix $F_{e,1}$ in \eqref{target_dyna}, as defined in Definition~\ref{def2}. Furthermore, determine a stabilizing controller gain $K_{2}^{\star}$ obtained from the corresponding generalized optimal control problem.
\end{problem}
We show that the general pipeline for deriving a solution to Problem~\ref{prop3} closely follows that for Problem~\ref{prop2}. Specifically, we first derive the target controller gain $\hat{K}_{\mathrm{tar},2}$ and then solve a modified optimization problem to obtain the inverse-optimal weights $(Q,N)$.
However, unlike in Problem~\ref{prop2}, the target controller gain $\hat{K}_{\mathrm{tar},2}$ cannot be obtained from \eqref{est_target_gain}, because it must also satisfy the consistency condition \eqref{cond_gain2}. Deriving $\hat{K}_{\mathrm{tar},2}$ therefore normally requires knowledge of the local system dynamics in \eqref{uncer_sys}.
Instead, we avoid using the explicit system model by exploiting state--input trajectory data collected from the local system.
Consider the uncertain system \eqref{uncer_sys} driven by a persistently exciting input $\bar{u}(k)$:
\begin{equation}\label{excite_sys}
    x(k+1) = (A + \gamma_1 BD)x(k) + \gamma_2 B\bar{u}(k).
\end{equation}

Let the data matrices $U^k$, $X^k$, and $X^{k+1}$ be defined as in \eqref{local_data_u}–\eqref{local_data_x_plus}, using data collected from \eqref{excite_sys}, and suppose that the resulting data satisfy the persistent excitation condition in Remark~\ref{remark:persistent_excitation}. 
Following \cite{rotulo2020data}, the closed-loop dynamics during data collection can be expressed as
\begin{equation}
X^{k+1} = \bar{A}X^k + \bar{B}U^k.
\end{equation}

Using this representation, we obtain
\begin{equation}\label{close-loop_interm_K}
\begin{aligned}
    \bar{A} + \bar{B}\hat{K}_{\mathrm{tar},2} &= [\bar{A} \quad \bar{B}]
    \begin{bmatrix}
        I_n \\ \hat{K}_{\mathrm{tar},2}
    \end{bmatrix} \\
    &= X^{k+1}
    \begin{bmatrix}
        X^k \\ U^k
    \end{bmatrix}^{\top}
    \left(
    \begin{bmatrix}
        X^k \\ U^k
    \end{bmatrix}
    \begin{bmatrix}
        X^k \\ U^k
    \end{bmatrix}^{\top}
    \right)^{-1}
    \begin{bmatrix}
        I_n \\ \hat{K}_{\mathrm{tar},2}
    \end{bmatrix} \\
    &= [\hat{M}_1 \quad \hat{M}_2]
    \begin{bmatrix}
        I_n \\ \hat{K}_{\mathrm{tar},2}
    \end{bmatrix},
\end{aligned}
\end{equation}
where $\hat{M} = [\hat{M}_1 \quad \hat{M}_2] \in \mathbb{R}^{n \times (n+m)}$, with $\hat{M}_1 \in \mathbb{R}^{n \times n}$ and $\hat{M}_2 \in \mathbb{R}^{n \times m}$ representing the least-squares estimates of $A + \gamma_1 BD$ and $\gamma_2 B$, respectively.

Combining \eqref{cond_gain2} and \eqref{close-loop_interm_K}, the target controller gain $\hat{K}_{\mathrm{tar},2}$ can be determined via a least-squares solution:
\begin{equation}\label{final_form_target}
    \hat{K}_{\mathrm{tar},2} = (\hat{M}_2^{\top} \hat{M}_2)^{-1} \hat{M}_2^{\top} (\hat{F}_{e,1} - \hat{M}_1).
\end{equation}
\begin{remark}\label{re5}
Equation~\eqref{close-loop_interm_K} is written in a pseudo-inverse form; however, it can equivalently be interpreted as a least-squares problem when $\hat{M}_2$ is not full column rank. 
Moreover, since the optimization-based approach proposed later does not require an exact estimate of $\hat{K}_{\mathrm{tar},2}$, the method remains effective even with an inaccurate target gain.
It is also worth noting that the derivation of $\hat{K}_{\mathrm{tar},2}$ in \eqref{close-loop_interm_K}–\eqref{final_form_target} involves two pseudo-inverse (or least-squares) computations, which may be computationally expensive for large-scale systems. 
In such cases, alternative methods such as gradient descent, recursive least squares, or singular value decomposition (SVD) can be employed.
\end{remark}

The least-squares estimate $\hat{K}_{\mathrm{tar},2}$ obtained from \eqref{final_form_target} need not be stabilizing because of finite-data and regression errors. As in Subsection~\ref{Data-Driven IRL}, we therefore seek a stabilizing generalized-LQR realization whose optimal gain remains close to $\hat{K}_{\mathrm{tar},2}$. Because $\hat{M}_1$ and $\hat{M}_2$ have already been identified from data, we work directly with the estimated model and omit the data-consistency constraints in \eqref{opt_invRL_nominal_modelfree}. Define the gain-matching residual
\begin{equation}\label{gain_residual_cross}
f_3(P,N) := (\hat{M}_2^\top P \hat{M}_2 + R)\hat{K}_{\mathrm{tar},2} + (\hat{M}_2^\top P \hat{M}_1 + N).
\end{equation}
Then a convex surrogate of Problem~\ref{prop3} is given by
\begin{equation}\label{opt_invRL_nominal_stab3}
\begin{aligned}
\underset{P,N}{\mathrm{minimize}} \quad & \lVert f_3(P,N) \rVert_F^2 \\
\mathrm{subject\ to} \quad & P \succ 0, \\
&
\begin{bmatrix}
    \hat{M}_1^\top P \hat{M}_1 - P & \hat{M}_1^\top P \hat{M}_2 + N^\top \\
    \hat{M}_2^\top P \hat{M}_1 + N & \hat{M}_2^\top P \hat{M}_2 + R
\end{bmatrix} \succeq 0, \\
&\hat{K}_{\mathrm{tar},2} \ \text{satisfies \eqref{final_form_target}}.
\end{aligned}
\end{equation}

For any optimizer $(P^\star,N^\star)$ of \eqref{opt_invRL_nominal_stab3}, the corresponding inverse-optimal controller and state-weighting matrix are given by
\begin{equation}\label{Q_K_from_P_N}
\begin{aligned}
K_{2}^{\star} &= -\left(R + \hat{M}_2^{\top} P^\star \hat{M}_2\right)^{-1}
\left(\hat{M}_2^{\top} P^\star \hat{M}_1 + N^\star \right), \\
Q^\star &= P^\star - \hat{M}_1^{\top} P^\star \hat{M}_1 \\
&\quad + \left(\hat{M}_1^{\top} P^\star \hat{M}_2 + N^{\star\top}\right) \\
&\qquad \times \left(R + \hat{M}_2^{\top} P^\star \hat{M}_2\right)^{-1}
\left(\hat{M}_2^{\top} P^\star \hat{M}_1 + N^\star \right).
\end{aligned}
\end{equation}

\begin{lemma}\label{lem_relaxed_cross_reduction}
Consider the relaxed version of Problem~\ref{prop3}, where the hard gain-matching constraint is replaced by the residual minimization $\lVert f_3(P,N) \rVert_F^2$ with $f_3(P,N)$ defined in \eqref{gain_residual_cross}. Then the optimization can be written in the variables $(P,N)$ as \eqref{opt_invRL_nominal_stab3}. If $(P^\star,N^\star)$ is an optimizer of \eqref{opt_invRL_nominal_stab3}, then the corresponding candidate state-weighting matrix is given by \eqref{Q_K_from_P_N}.
\end{lemma}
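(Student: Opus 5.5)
The argument mirrors the proof of Theorem~\ref{lem_relaxed_P_reduction}, with the identified model $(\hat{M}_1,\hat{M}_2)$ in place of $(A,B)$ and the cross term $N$ carried along. I would introduce the shorthand
\[
S(P):=R+\hat{M}_2^\top P\hat{M}_2,\qquad L(P,N):=\hat{M}_2^\top P\hat{M}_1+N,\qquad K(P,N):=-S(P)^{-1}L(P,N).
\]
Since $R\succ0$ and $P\succ0$, $S(P)\succ0$ for every feasible $P$, so $K(P,N)$ is well defined and agrees with the expression for $K_2^\star$ in \eqref{Q_K_from_P_N}.

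\textbf{Step 1: eliminate $Q$.} For any admissible triple $(P,Q,N)$ in Problem~\ref{prop3}, the generalized Riccati equation \eqref{ric_cond_cross}, with $(\bar A,\bar B)$ replaced by $(\hat{M}_1,\hat{M}_2)$, gives
\[
P=Q+\hat{M}_1^\top P\hat{M}_1-L^\top S^{-1}L .
\]
This is linear in $Q$ and solves uniquely as
\[
Q(P,N)=P-\hat{M}_1^\top P\hat{M}_1+L^\top S^{-1}L .
\]
Hence the search over $(P,Q,N)$ reduces to a search over $(P,N)$, and evaluating at $(P^\star,N^\star)$ yields exactly the formula for $Q^\star$ in \eqref{Q_K_from_P_N}.

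\textbf{Step 2: identify the objective.} A direct expansion gives
\[
f_3(P,N)=S(P)\hat{K}_{\mathrm{tar},2}+L(P,N)=S(P)\bigl(\hat{K}_{\mathrm{tar},2}-K(P,N)\bigr).
\]
So $\|f_3\|_F^2$ is the relaxed gain-matching residual. It vanishes exactly when the hard constraint $K_2=\hat{K}_{\mathrm{tar},2}$ holds, and it is a convex quadratic in $(P,N)$ because $f_3$ is affine in $(P,N)$.

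\textbf{Step 3: identify the LMI.} Since $S(P)\succ0$, the Schur complement shows that the block LMI in \eqref{opt_invRL_nominal_stab3} holds if and only if
\[
\hat{M}_1^\top P\hat{M}_1-P-L^\top S^{-1}L\succeq0,
\]
that is, if and only if $-Q(P,N)\succeq0$. Two interpretations are possible.
\begin{itemize}
\item \emph{Forward-LQR reading.} Following the convention of \eqref{LMI}, the LMI is read as the Riccati inequality characterizing the LMI form of the generalized LQR problem, by the same argument as Lemma~\ref{lem_lmi_lqr} applied to the cost \eqref{local_cost_cross}.
\item \emph{Riccati-feasibility reading.} Following how \eqref{opt_invRL_nominal_stab2} was derived from \eqref{LMI} by dropping $Q$, the LMI is the Riccati-feasibility condition together with $P\succ0$.
\end{itemize}
I would adopt the second reading. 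The Riccati equality defining $Q$ is then built into the parametrization, the remaining constraints are exactly the LMI feasibility conditions, and the relaxed problem takes the form \eqref{opt_invRL_nominal_stab3}.

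\textbf{Main obstacle.} The delicate point is the sign and admissibility of $Q$. With the block LMI as written, Step~3 forces $Q(P,N)\preceq0$, which is in tension with the block positivity requirement \eqref{block_SMD}. The proof of Theorem~\ref{lem_relaxed_P_reduction} glosses over the same issue, so I would not try to establish \eqref{block_SMD}. Instead, I would claim only what the statement asserts: the reformulation in the variables $(P,N)$ and the candidate formula for $Q^\star$. I would add a remark that \eqref{block_SMD} should be checked a posteriori, or enforced as an additional convex constraint.
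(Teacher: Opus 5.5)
Your proposal is correct and follows essentially the same route as the paper's proof. It uses the same $S(P)$ and $K(P,N)$, eliminates $Q$ through the generalized Riccati relation with $(\hat{M}_1,\hat{M}_2)$, uses the factorization $f_3(P,N)=S(P)\bigl(\hat{K}_{\mathrm{tar},2}-K(P,N)\bigr)$, and then reads the LMI as the generalized-LQR feasibility condition.

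Your Schur-complement observation that the LMI as written forces $Q(P,N)\preceq 0$ is correct and is not addressed in the paper's proof. However, your proposed remedy does not work: combined with \eqref{block_SMD}, this forces $Q^\star=0$ and $N^\star=0$, so checking or imposing \eqref{block_SMD} on top of the LMI cannot repair the formulation.
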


\begin{proof}
The proof is identical in spirit to Lemma~\ref{lem_relaxed_P_reduction}, after replacing $(A,B)$ by the identified pair $(\hat{M}_1,\hat{M}_2)$ and retaining the cross-term variable $N$. Define
\[
\begin{aligned}
S(P) &:= R+\hat{M}_2^\top P\hat{M}_2, \\
K(P,N) &:= -S(P)^{-1}(\hat{M}_2^\top P\hat{M}_1+N),
\end{aligned}
\]
so $S(P)\succ0$ for every feasible $P$.

The generalized Riccati relation with $(\bar{A},\bar{B})$ replaced by $(\hat{M}_1,\hat{M}_2)$ uniquely determines $Q$ from $(P,N)$ as
\[
\begin{aligned}
Q(P,N) &= P-\hat{M}_1^\top P\hat{M}_1 \\
&\quad +(\hat{M}_1^\top P\hat{M}_2+N^\top)S(P)^{-1}(\hat{M}_2^\top P\hat{M}_1+N),
\end{aligned}
\]
which is exactly \eqref{Q_K_from_P_N} evaluated at $(P^\star,N^\star)$. Hence the search over $(P,Q,N)$ reduces to the variables $(P,N)$.

Moreover,
\[
\begin{aligned}
f_3(P,N) &= S(P)\hat{K}_{\mathrm{tar},2}+\hat{M}_2^\top P\hat{M}_1+N \\
&= S(P)\bigl(\hat{K}_{\mathrm{tar},2}-K(P,N)\bigr).
\end{aligned}
\]
Therefore $\lVert f_3(P,N)\rVert_F^2$ is exactly the relaxed gain-matching residual. The LMI in \eqref{opt_invRL_nominal_stab3} is the corresponding generalized-LQR feasibility condition. 
\end{proof}

\begin{algorithm}
    \caption{Model-Free Inverse RL Imitation Control for Uncertain Local Systems}\label{Alg 3}
\begin{algorithmic}[1]
\State Select an arbitrary matrix $R \succ 0$.
\State Estimate the expert closed-loop matrix $\hat{F}_{e,1}$ from expert trajectory data using \eqref{est_target_dyna}.
\State Collect data tuples $(U^k, X^k, X^{k+1})$ from the local system \eqref{excite_sys} driven by a persistently exciting input $\bar{u}(k)$.
\State Estimate $[\hat{M}_1 \; \hat{M}_2]$ from the regression in \eqref{close-loop_interm_K}.
\State Compute $\hat{K}_{\mathrm{tar},2}$ from \eqref{final_form_target} using a least-squares or pseudo-inverse implementation.
\State Solve \eqref{opt_invRL_nominal_stab3} to obtain $(P^\star,N^\star)$.
\State Recover $Q^\star$ and $K_{2}^{\star}$ from \eqref{Q_K_from_P_N}.
\State Return the generalized cost pair $(Q^\star,N^\star)$ together with the stabilizing controller gain $K_{2}^{\star}$.
\end{algorithmic}
\end{algorithm}

\subsection{Robust cost design from the nominal inverse-RL solution}

Throughout this paper, the goal of inverse reinforcement learning is not only to recover an expert controller, but also to identify a cost representation that explains the expert behavior and can be reused for related control-design tasks. This point becomes especially important for a population of systems that share the same nominal structure but are subject to different perturbations. Although the method in Subsection~\ref{Data-Driven IRL_pertubed} can be applied to each perturbed system separately, doing so is inefficient and does not exploit the common structure across the population. Motivated by this limitation, we now seek a nominal cost design that is robust in an ensemble sense, so that a single recovered cost transfers well on average across the perturbed systems.
\begin{assumption}\label{ass5}
The expert trajectory data $(x_e(k),u_e(k))$ for the system \eqref{expert_dynamic} are available and satisfy the persistent excitation condition stated in Remark~\ref{remark:persistent_excitation}.
\end{assumption}

\begin{assumption}\label{ass6}
The population of perturbed systems is described by
\begin{equation}\label{population_plant}
    x_j(k+1) = (A + B D_j)x_j(k) + Bu_j(k), \qquad j=1,\ldots,
\end{equation}
where $D_j \in \mathbb{R}^{m \times n}$ are i.i.d. random perturbation matrices with known distribution, zero mean, and finite second moment.
\end{assumption}
\begin{problem}\label{prop4}
Given Assumptions~\ref{ass5}--\ref{ass6} and a prescribed matrix $R \in \mathbb{R}^{m \times m} \succ 0$, determine a nominal generalized cost pair $(Q,N)$ such that the expected mismatch between the expert closed-loop matrix and the forward-LQR closed-loop matrices of the perturbed plants is minimized. Specifically, for each realization $D_j$, define
\begin{align}
    A_j &:= A + B D_j, \\
    K_j(Q,N) &:= -\left(R + B^{\top} P_j B\right)^{-1}\left(B^{\top} P_j A_j + N\right), \\
    \label{current_dynamic}
    F_j(Q,N) &:= A_j + B K_j(Q,N),
\end{align}
where $P_j$ is the stabilizing solution of the generalized Riccati equation
\begin{equation}
\begin{aligned}
    P_j &= Q + A_j^{\top} P_j A_j \\
    &\quad - (A_j^{\top} P_j B + N^{\top})\left(R + B^{\top} P_j B\right)^{-1}(B^{\top} P_j A_j + N).
\end{aligned}
\end{equation}
The objective is to solve
\begin{equation}\label{robust_problem_statement}
    \min_{Q,N} \; \mathbb{E}_{D_j}\!\left[\left\|F_j(Q,N)-F_{e,1}\right\|_F^2\right]
\end{equation}
subject to
\begin{equation}
\begin{bmatrix}
Q & N^{\top} \\
N & R
\end{bmatrix} \succeq 0.
\end{equation}
\end{problem}

In practice, the target matrix $F_{e,1}$ in \eqref{robust_problem_statement} can be replaced by its estimate $\hat{F}_{e,1}$ obtained from \eqref{est_target_dyna}, while the nominal matrices in the forward problem can be replaced by the data-driven estimates $(\hat{M}_1,\hat{M}_2)$ from \eqref{close-loop_interm_K}. A natural approach is therefore Monte Carlo optimization: sample perturbations $D_j$, solve the forward generalized LQR problem for each realization, and minimize the empirical version of \eqref{robust_problem_statement}. To keep the forward map numerically stable, we use the extended LMI characterization in Lemma~\ref{lem_lmi_lqr} instead of the iterative schemes used in \cite{xue2021inverse,xue2021inverseQlearning,lian2022inverse,wu2026output}, as follows:
\begin{equation}\label{LMI_extended}
\begin{aligned}
& \underset{P}{\mathrm{maximize}}
& & \operatorname{tr}(P) \\
& \mathrm{subject\ to}
& &
P \succ 0, \\
& & &
\begin{bmatrix}
    A^\top P A + Q -P & A^\top P B +N^\top\\
    B^\top P A+ N & B^\top P B + R
\end{bmatrix} \succ 0 .
\end{aligned}
\end{equation}

This optimization requires gradients of an SDP-defined forward map with respect to $(Q,N)$. Rather than differentiating through the solver iterations, we use differentiable convex optimization layers, such as CVXPYLayers \cite{agrawal2019differentiable}, and compute gradients by implicit differentiation of the optimality conditions. This approach remains numerically stable and yields exact gradients under standard regularity assumptions.

The main drawback of a full Monte Carlo formulation is its computational cost: achieving an estimation error of order $\mathcal{O}(\varepsilon)$ typically requires $\mathcal{O}(\varepsilon^{-2})$ samples. We therefore adopt a stochastic approximation approach using stochastic gradient descent (SGD), where $(Q,N)$ are updated based on gradients computed from a small batch of perturbation samples at each iteration. This substantially reduces the per-iteration cost while preserving scalability. The overall pipeline is summarized in Fig~\ref{fig:1}.

\begin{figure*}[t]
\centering
\begin{tikzpicture}[
  font=\sffamily\small,
  arrow/.style={-{Stealth[length=6pt]}, thick, draw=black!70},
  outerbox/.style={
    rectangle, rounded corners=3pt,
    draw=blue!40!black, fill=blue!25,
    minimum width=1.7cm, minimum height=1.35cm,
    align=center, inner sep=1.5pt
  },
  outertitle/.style={
    rectangle, rounded corners=3pt,
    draw=blue!40!black, fill=blue!10,
    minimum width=1.5cm, minimum height=0.4cm,
    align=center, font=\bfseries\sffamily\footnotesize,
    inner sep=0.8pt
  },
  blockbody/.style={
    rectangle, rounded corners=3pt,
    draw=blue!40!black, fill=blue!25,
    align=center, inner sep=2.5pt
  },
  blocktitle/.style={
    rectangle, rounded corners=3pt,
    draw=blue!40!black, fill=blue!10,
    align=center, font=\bfseries\sffamily\scriptsize,
    inner sep=1pt
  }
]


\node[blockbody, font=\sffamily\tiny, minimum width=1.95cm, minimum height=1.6cm] (inputs) at (0,0) {Mini batch of \\ $(Q,N,\hat{A}_j)$ \\ $\uparrow$ \\Reparameterize \\ $(L_Q,L_N)$};
\node[blocktitle, minimum width=1.95cm] (inputtitle) at ([yshift=0pt]inputs.north) {\strut Inputs};

\node[
  font=\sffamily\scriptsize,
  align=left,
  text width=1.35cm,
  inner sep=1pt,
  right=0.35cm of inputs
] (labels) {Parameters\\[-1pt]Variables: $P$\\[-1pt]Constants:\\[-1pt]$\hat{B},R$
};

\coordinate (labelbracketmid) at ([xshift=0.30cm]labels.east);
\draw[draw=black!70, line width=1pt]
  ([xshift=0.08cm,yshift=5.5pt]labels.north east) --
  ([xshift=0.15cm,yshift=5.5pt]labels.north east) --
  ([xshift=0.15cm,yshift=-5.5pt]labels.south east) --
  ([xshift=0.08cm,yshift=-5.5pt]labels.south east);

\node[blockbody, minimum width=1.95cm, minimum height=1.6cm, right=0.50cm of labels] (problem) {};
\node[blocktitle, minimum width=1.95cm] (problemtitle) at ([yshift=0pt]problem.north) {\strut Problem};
\node[font=\sffamily\tiny, align=center] at (problem.center) {%
$\max_{P}\,\mathrm{tr}(P)$\\[-1pt]
$\mathrm{s.t.}\!\left[\!\!\begin{array}{@{}c@{\,}c@{\,}c@{}}
\scriptstyle Q-P & \scriptstyle N^\top & \scriptstyle \hat{A}_j^\top P \\
\scriptstyle N & \scriptstyle R & \scriptstyle \hat{B}^\top P \\
\scriptstyle P\hat{A}_j & \scriptstyle P\hat{B} & \scriptstyle -P
\end{array}\!\!\right] \succ 0$\\[-1pt] $P\succ 0$
};

\node[blockbody, minimum width=1.95cm, minimum height=1.55cm, right=0.25cm of problem] (coneprog) {};
\node[blocktitle, minimum width=1.95cm, font=\bfseries\sffamily\tiny] (coneprogtitle) at ([yshift=0pt]coneprog.north) {\strut Canonicalized\\Cone Program};
\node[font=\sffamily\tiny, align=center] at (coneprog.center) {%
$\min_{x_c}\; c^\top x_c$\\[1pt]
$\mathrm{s.t.}\; A_c x_c \preceq_{K} b_c$
};

\node[blockbody, minimum width=1.65cm, minimum height=1.45cm, right=0.25cm of coneprog] (conesol) {};
\node[blocktitle, minimum width=1.65cm] (conesoltitle) at ([yshift=0pt]conesol.north) {\strut Cone\\Program};
\node[font=\sffamily\scriptsize, align=center] at (conesol.center) {Solution:\\[1pt] $x_c^\star$};

\node[blockbody, minimum width=1.75cm, minimum height=1.45cm, right=0.25cm of conesol] (origsol) {};
\node[blocktitle, minimum width=1.75cm] (origsoltitle) at ([yshift=0pt]origsol.north) {\strut Original\\Problem};
\node[font=\sffamily\scriptsize, align=center] (sol) at (origsol.center) {Solution:\\[1pt] $P^\star$};
\node[font=\bfseries\sffamily\small] (blankblock) at ([yshift=+9pt,] coneprogtitle.north) {};

\begin{scope}[on background layer]
  \node[
    rectangle, rounded corners=5pt,
    draw=black!40, fill=black!10,
    inner xsep=6pt,
    inner ysep=8pt,
    fit=(labels)(problemtitle)(coneprogtitle)(conesoltitle)(origsoltitle)(problem)(coneprog)(conesol)(origsol)(blankblock)
  ] (container) {};
  \path[
    draw=black!40,
    fill=black!7,
    rounded corners=4pt
  ]
    ([yshift=-0.pt]container.north west)
    rectangle
    ([yshift=-20.5pt]container.north east);
\end{scope}

\node[font=\bfseries\sffamily\small]  at ([yshift=-11pt,] container.north) {\strut cvxlayer};
\node[outerbox, anchor=west, text width=2.15cm, font=\sffamily\tiny\scriptsize] (loss) at ([xshift=0.65cm]container.east |- inputs.center) {$\mathbb{E}_{D_j}[\|\hat{\mathcal{L}}\|_F^2]$};
\node[outertitle, text width=2.2cm] (losstitle) at ([yshift = -5.5pt]loss.north) {\strut Loss};

\draw[arrow] (inputs.east) -- ([yshift=-4pt]labels.north west);
\draw[arrow] ([xshift= -4pt]labelbracketmid) -- (problem.west);
\draw[arrow] (problem.east) -- (coneprog.west);
\draw[arrow] (coneprog.east) -- (conesol.west);
\draw[arrow] (conesol.east) -- (origsol.west);
\draw[arrow] (origsol.east) -- (loss.west);

\draw[arrow]
  ([xshift=-2pt,yshift=0.0pt]loss.south)
  .. controls +(0, -1.5) and +(0, -1.5) ..
  ([xshift=2pt,yshift=0.0pt]inputs.south)
  node[midway, below=4pt, font=\sffamily\footnotesize, align=center, text width=3.8cm] {Backpropagation with\\SGD updates};

\end{tikzpicture}
\caption{General pipeline of Algorithm~\ref{Alg 4}}
\label{fig:1}
\end{figure*}

\begin{remark}\label{remark:dpp_reparam}
For a differentiable implementation based on CVXPYLayers, the constraint
$\begin{bmatrix}
Q & N^{\top} \\
N & R
\end{bmatrix} \succeq 0$
should be enforced through a factorized parameterization. Let $C$ be a fixed Cholesky factor of the prescribed matrix $R$, so that $R = CC^{\top}$, and write
\[
\begin{bmatrix}
Q & N^{\top} \\
N & R
\end{bmatrix}
=
\begin{bmatrix}
L_Q & L_N \\
0 & C
\end{bmatrix}
\begin{bmatrix}
L_Q^{\top} & 0 \\
L_N^{\top} & C^{\top}
\end{bmatrix}.
\]
Then $Q = L_Q L_Q^{\top} + L_N L_N^{\top}$ and $N = C L_N^{\top}$, so the semidefinite constraint is satisfied automatically for all factor variables $(L_Q,L_N)$. Moreover, although \eqref{LMI_extended} is convex, it is not disciplined parametrized programming (DPP) compliant. For differentiation through the SDP layer, we therefore use the equivalent DPP-compatible formulation \eqref{LMI_final}.
\begin{equation}\label{LMI_final}
\begin{aligned}
& \underset{P}{\mathrm{maximize}}
& & \operatorname{tr}(P) \\
& \mathrm{subject\ to}
& &
P \succ 0, \\
& & &
\begin{bmatrix}
    Q -P & N^\top & \hat{M}_j^\top P\\
    N &  R & \hat{M}_2^\top P \\
    P\hat{M}_j  & P\hat{M}_2 & -P
\end{bmatrix} \succ 0 .
\end{aligned}
\end{equation}
where $\hat{M}_j = \hat{M}_1+ \hat{M}_2D_j$.
\end{remark}

A practical stochastic-approximation implementation of the robust inverse-RL design is summarized in Algorithm~\ref{Alg 4}.

\begin{algorithm}
    \caption{Robust Data-Driven Inverse RL via Stochastic Approximation}\label{Alg 4}
\begin{algorithmic}[1]
\State Select a prescribed matrix $R \succ 0$, compute a Cholesky factor $C$ such that $R = CC^{\top}$, and choose a mini-batch size $N_b$, step sizes $\{\eta_t\}_{t=0}^{T-1}$, and a maximum number of SGD iterations $T$.
\State Estimate the expert closed-loop matrix $\hat{F}_{e,1}$ using \eqref{est_target_dyna} and estimate $[\hat{M}_1\;\hat{M}_2]$ using the regression in \eqref{close-loop_interm_K} from expert trajectory data.
\State Compute $\hat{K}_{\mathrm{tar},2}$ from \eqref{final_form_target} and construct initial factors $(L_Q^{(0)},L_N^{(0)})$ according to Remark~\ref{remark:dpp_reparam}.
\State \textbf{For} $t=0,1,\ldots,T-1$, repeat the following steps.
\State Form
\[
Q^{(t)} = L_Q^{(t)}L_Q^{(t)\top} + L_N^{(t)}L_N^{(t)\top},
\qquad
N^{(t)} = C L_N^{(t)\top}.
\]
\State Draw a mini-batch of perturbation samples $\{D_j\}_{j=1}^{N_b}$ from the distribution in Assumption~\ref{ass6}.
\State For each sampled perturbation, solve the forward generalized-LQR problem by the DPP-compatible SDP \eqref{LMI_final} and compute $F_j(Q^{(t)},N^{(t)})$ in \eqref{current_dynamic}.
\State Form the mini-batch empirical loss
\[
\widehat{\mathcal{L}}_t = \frac{1}{N_b}\sum_{j=1}^{N_b}\left\|F_j(Q^{(t)},N^{(t)})-\hat{F}_{e,1}\right\|_F^2.
\]
\State Compute the gradients of $\widehat{\mathcal{L}}_t$ with respect to $(L_Q,L_N)$ by implicit differentiation through the SDP layer.
\State Update $(L_Q^{(t)},L_N^{(t)})$ using an SGD step with stepsize $\eta_t$.
\State Recover $(Q^{(T)},N^{(T)})$ from the final factors and return the robust generalized cost pair $(Q^{(T)},N^{(T)})$.
\end{algorithmic}
\end{algorithm}

Algorithm~\ref{Alg 4} fits naturally into the classical stochastic-approximation framework of \cite{robbins1951stochastic}. Indeed, if we define the parameter vector by
\[
\theta_t := \begin{bmatrix}
\mathrm{vec}(L_Q^{(t)}) \\
\mathrm{vec}(L_N^{(t)})
\end{bmatrix},
\]
then the SGD step in Algorithm~\ref{Alg 4} can be written as
\[
\theta_{t+1} = \theta_t + \eta_t g(\theta_t,w_t),
\]
where $g(\theta_t,w_t) := -\nabla_{\theta} \widehat{\mathcal{L}}_t$ and $w_t$ denotes the random mini-batch of perturbation samples used at iteration $t$. Hence, the randomness in our method comes from the sampled perturbations $\{D_j\}$, while the batch loss $\widehat{\mathcal{L}}_t$ serves as a stochastic approximation of the expected objective in \eqref{robust_problem_statement}. Moreover, because $(Q,N)$ are parameterized through $(L_Q,L_N)$ with fixed $R = CC^{\top}$, the semidefinite feasibility constraint is enforced automatically, so no explicit projection term is required in our implementation. Therefore, Algorithm~\ref{Alg 4} is a projected-free SGD instance of stochastic approximation. In particular, if the step sizes satisfy
\[
\sum_{t=0}^{\infty} \eta_t = \infty, \qquad \sum_{t=0}^{\infty} \eta_t^2 < \infty,
\]
and if the associated mean dynamics $\dot{\theta} = \mathbb{E}_{w}[g(\theta,w)]$ are asymptotically stable at a fixed point $\theta^{\star}$, then classical stochastic-approximation theory implies convergence of the iterates to $\theta^{\star}$ under standard regularity assumptions.

\section{Comparison with Related Data-Driven Inverse RL Methods}

This section compares the proposed inverse-RL schemes with representative data-driven approaches in \cite{xue2021inverse,lian2022inverse,lian2024inverse,wu2026inverse}. The structural comparison is summarized in Fig.~\ref{fig:2}. 
\begin{figure}
\centering
\begin{tikzpicture}[
    scale=0.72, every node/.style={transform shape},
    block/.style={
        rectangle,
        draw,
        rounded corners=6pt,
        inner sep=3pt,
        text width=2.35cm,
        align=center,
        minimum height=1cm,
        font=\bfseries
    },
    label_style/.style={
        text width=2.2cm, 
        align=left, 
        font=\bfseries
    },
    arrow/.style={-{Stealth[scale=0.5]}, line width=0.5pt}
]

\node[label_style] (label1) {Two-loop\\inverse RL\\{\cite{xue2021inverse}}};

\node[block, right=0.0cm of label1, minimum width=1.2cm] (d1) {Data};
\node[block, right=0.4cm of d1] (r1b1) {Policy evaluation /\\ Policy update};
\node[block, right=0.5cm of r1b1] (r1b2) {Gradient descent \\ correction};
\node[block, below=0.6cm of r1b2] (r1b3) {Cost weight};
\draw[arrow] (r1b1.110) -- ++(0,0.3) -- ++(0.6,0) -- (r1b1.60);
\draw[arrow] (d1) -- (r1b1);
\draw[arrow] (r1b1) -- (r1b2);
\draw[arrow] (r1b2.east) -- ++(0.2,0) |- (r1b3.east);
\draw[arrow] (r1b3.west) -| (r1b1.south);

\node[label_style, below=1.3cm of label1] (label2) {Single-loop\\inverse RL\\{\cite{lian2022inverse,lian2024inverse,wu2026inverse}}};
\node[block, right=0.0cm of label2, minimum width=1.2cm] (d2) {Data};
\node[block, right=0.4cm of d2] (r2b1) {Policy evaluation};
\node[block, right=0.5cm of r2b1] (r2b2) {Policy update};
\node[block, below=0.6cm of r2b2] (r2b3) {Cost weight};
\draw[arrow] (d2) -- (r2b1);
\draw[arrow] (r2b1) -- (r2b2);
\draw[arrow] (r2b2.east) -- ++(0.2,0) |- (r2b3.east);
\draw[arrow] (r2b3.west) -| (r2b1.south);

\node[label_style, below=1.3cm of label2] (label3) {Iteration-free\\inverse RL\\{Alg.~\ref{Alg 1}--Alg.~\ref{Alg 3}}};
\node[block, right=0.0cm of label3, minimum width=1.2cm] (d3) {Data};
\node[block, right=0.4cm of d3] (r3b1) {One convex solve\\Value matrix / policy};
\node[block, right=0.5cm of r3b1] (r3b2) {Cost weight};
\draw[arrow] (d3) -- (r3b1);
\draw[arrow] (r3b1) -- (r3b2);

\node[label_style, below=1.3cm of label3] (label4) {Robust\\inverse RL\\{Alg.~\ref{Alg 4}}};
\node[block, right=0.0cm of label4, minimum width=1.1cm] (d4) {Data \\ Noise};
\node[block, right=0.4cm of d4] (r4b1) {Forward SDP};
\node[block, right=0.5cm of r4b1] (r4b2) {Loss / gradient};
\node[block, below=0.6cm of r4b2] (r4b3) {SGD update};
\node[block, left=0.4cm of r4b3] (r4b4) {Factors $(L_Q,L_N)$};
\draw[arrow] (d4) -- (r4b1);
\draw[arrow] (r4b1) -- (r4b2);
\draw[arrow] (r4b2.east) -- ++(0.2,0) |- (r4b3.east);
\draw[arrow] (r4b3) -- (r4b4);
\draw[arrow] (r4b4.north) -| (r4b1.south);

\end{tikzpicture}

\caption{Comparison of algorithm structures.}
\label{fig:2}
\end{figure}

Existing data-driven inverse-RL methods can be broadly classified into two-loop and single-loop architectures. In two-loop methods, a forward RL problem is repeatedly solved inside an outer cost-update loop. Single-loop methods simplify this structure by incorporating the weight update step into the value-iteration or policy-iteration loop. In contrast, Algorithms~\ref{Alg 1}--\ref{Alg 3} proposed in this paper solve the inverse-RL problem after the required regression step through a single convex program and are therefore iteration-free at the Inverse RL stage. Algorithm~\ref{Alg 4} introduces a loop only because the robust objective is optimized by stochastic approximation.

Note that, according to \cite{wu2026inverse}, the overall computational complexity of the two-loop and single-loop algorithms are
$\mathcal{O}\big((\tfrac{n(n+1)}{2} + nm)^2 N_1\big) + \mathcal{O}\big((\tfrac{n(n+1)}{2})^2 N_2\big)$
and
$\mathcal{O}\big((n(n+1) + nm)^2 N_3\big),$
respectively, where $N_1 \geq \tfrac{n(n+1)}{2} + nm$, $N_2 \geq \tfrac{n(n+1)}{2}$, and $N_3 \geq n(n+1) + nm$.

Unlike other existing single-loop methods, the approach in \cite{wu2026inverse} merges the cost weight update and policy evaluation steps into a single stage by solving an LMI-based optimization problem. The complexity of this step is characterized as
$\mathcal{O}(\phi^{2.75} \rho^{1.5}),$
where $\phi = 1.5,n(n+1) + 2nm$ denotes the number of decision variables and $\rho = 3$ is the number of LMI constraints.

In our method, Algorithm~\ref{Alg 1} has $\phi = n(n+1)$ and $\rho = 2$. Algorithm~\ref{Alg 2} has
$\phi = \tfrac{n(n+1)}{2} + \tfrac{(n+m)(n+m+1)}{2}, \quad \rho = 2.$
However, in this case, the number of linear constraints scales with the data size $N_\mathrm{d} > \tfrac{(n+m)(n+m+1)}{2}$, which significantly affects the overall complexity. In particular, the dominant cost arises from handling these constraints, leading to a complexity of
$\mathcal{O}(\phi^2 N_\mathrm{d}).$
This makes the method computationally expensive, as each interior-point iteration requires processing $N_\mathrm{d}$ constraints, similar in structure to solving a large-scale least-squares problem. Although this optimization can become large, it is solved only once and does not require repeated policy/value iterations.

For the model-based Algorithm~\ref{Alg 3}, we have $\phi = \tfrac{n(n+1)}{2} + nm$ and $\rho = 2$. Additionally, the pseudo-inverse step used to estimate $\hat{M}$ in \eqref{close-loop_interm_K} has complexity
$\mathcal{O}((n^2 + mn)^2 N_\mathrm{d}),$
when computed via SVD \cite{golub2013matrix}. Since this step is decoupled from the optimization, it can be accelerated using more efficient techniques (see Remark~\ref{re5}), making it more scalable with respect to data compared to the model-free approach.

Finally, Algorithm~\ref{Alg 4} has $\phi = \tfrac{n(n+1)}{2} + nm$ and $\rho = 2$ per forward pass. Moreover, as discussed in \cite{agrawal2019differentiable}, the backward pass obtained by implicit differentiation requires solving a linear system derived from the optimality conditions, and its cost is of the same order as that of the forward pass.

\begin{remark}
Note that in Algorithm~\ref{Alg 4}, the matrix inversion of $R + B^{\top} P_j B$ in the discrete-time gain expression has complexity $\mathcal{O}(m^3)$ for each forward solve. This cost could be reduced in a continuous-time counterpart of the method, where the optimal gain typically takes the form $K_j = R^{-1}(B^{\top} P_j + N)$. In that case, the only matrix that must be inverted is $R$, which is fixed and can therefore be computed once in advance and reused throughout the optimization loop.
\end{remark}

Overall, the main advantage of the proposed framework is structural simplicity. Algorithms~\ref{Alg 1}--\ref{Alg 3} remove the repeated inverse-RL policy/value updates that appear in existing two-loop and single-loop methods, while Algorithm~\ref{Alg 4} introduces iteration only for robust optimization. This preserves the numerical stability of the convex-optimization formulation and yields a simpler computational pipeline than classical iterative inverse-RL schemes.

\section{Simulation}
This section validates the proposed methods on the discrete-time power-system example in \cite{vamvoudakis2015asymptotically}:
\begin{equation*}
\begin{bmatrix}
\Delta\dot{\zeta} \\
\Delta\dot{\mathcal{P}}_m \\
\Delta\dot{f}_G
\end{bmatrix} = 
\begin{bmatrix}
0.882 & 0.001 & 0.046\\
0.111 & 0.904 & 0.002 \\
0.003 & 0.057 & 0.999
\end{bmatrix}
\begin{bmatrix}
\Delta\zeta \\
\Delta \mathcal{P}_m \\
\Delta f_G
\end{bmatrix} + 
\begin{bmatrix}
 0.117\\
 0.007\\
 0.000
\end{bmatrix} \Delta P_c,
\end{equation*}
where $\Delta\zeta$, $\Delta \mathcal{P}_m$, $\Delta f_G$, and $\Delta P_c$ denote the deviations in governor valve position, generator mechanical power output, frequency, and control input, respectively. The expert weighting matrices are chosen as $Q_e = I_3$ and $R_e = 1$. The corresponding expert gain is
\[
K_{e,1} = -[0.869 \quad 0.807 \quad 1.395].
\]
All convex optimization problems are solved in the CVXPY framework with the MOSEK solver, except Algorithm~\ref{Alg 4}, for which the ECOS solver is used.
\subsection{Nominal cases}
In the nominal case, we compare the methods proposed in this paper with the discrete-time inverse-RL algorithm in \cite{xue2021inverseQlearning}, using a stopping tolerance of $10^{-8}$. The formulations \eqref{opt_feas} and \eqref{opt_invRL_nominal} solve Problem~\ref{prop1}, whereas \eqref{opt_invRL_nominal_stab} and \eqref{opt_invRL_nominal_modelfree} solve Problem~\ref{prop2}. Table~\ref{tab:nominal-performance} reports the runtime and the gain-recovery error $\|\Delta K^\star\|_F$ for different levels of estimation error in $\hat{K}_{e,1}$.

\begin{table*}[t]
\centering
\caption{Performance comparison in the nominal case.}
\label{tab:nominal-performance}
\small
\setlength{\tabcolsep}{5pt}
\renewcommand{\arraystretch}{1.1}
\begin{tabular}{llccccc}
\toprule
\multirow{2}{*}{Case} & \multirow{2}{*}{Metric} & \multicolumn{5}{c}{Algorithms} \\
\cmidrule(lr){3-7}
& & \eqref{opt_feas} & \eqref{opt_invRL_nominal} & \eqref{opt_invRL_nominal_stab} & \eqref{opt_invRL_nominal_modelfree} & \cite{xue2021inverseQlearning} \\
\midrule
\multirow{2}{*}{\makecell{$\|\hat{K}_{e,1}-K_{e,1}\|_F$ \\ $=0$}} & Run time & $0.13$ s & $0.20$ s & $0.18$ s & $0.50$ s & $0.35$ s \\
& $\|\Delta K^\star\|_F$ & $10^{-15}$ & $10^{-15}$ & $10^{-15}$ & $10^{-6}$ & $0.74$ \\
\midrule
\multirow{2}{*}{\makecell{$\|\hat{K}_{e,1}-K_{e,1}\|_F$ \\ $=0.1$}} & Run time & $0.40$ s & $0.24$ s & $0.25$ s & $0.40$ s & $0.44$ s \\
& $\|\Delta K^\star\|_F$ & $0.15$ & $0.15$ & $0.15$ & $10^{-4}$ & $0.78$ \\
\midrule
\multirow{2}{*}{\makecell{$\|\hat{K}_{e,1}-K_{e,1}\|_F$ \\ $=0.5$}} & Run time & $0.20$ s & $0.30$ s & $0.12$ s & $0.60$ s & $0.50$ s \\
& $\|\Delta K^\star\|_F$ & $0.63$ & $0.63$ & $0.63$ & $1.7\times 10^{-5}$ & $0.82$ \\
\midrule
\multirow{2}{*}{\makecell{$\|\hat{K}_{e,1}-K_{e,1}\|_F$ \\ $=2.5$}} & Run time & $0.30$ s & $0.15$ s & $0.20$ s & $0.40$ s & $0.50$ s \\
& $\|\Delta K^\star\|_F$ & $\mathrm{inf}$ & $3.02$ & $\mathrm{inf}$ & $1.57$ & $3.62$ \\
\bottomrule
\end{tabular}
\end{table*}

Table~\ref{tab:nominal-performance} shows that when the estimate $\hat{K}_{e,1}$ is exact, the proposed optimization-based methods recover the expert gain to numerical precision, whereas the benchmark in \cite{xue2021inverseQlearning} still exhibits a relatively large gain error. As the estimation error increases to $0.1$ and $0.5$, the relaxed model-free formulation \eqref{opt_invRL_nominal_modelfree} remains highly accurate, while the exact formulations become increasingly sensitive to estimation error. In the large-error case $\|\hat{K}_{e,1}-K_{e,1}\|_F=2.5$, \eqref{opt_feas} and \eqref{opt_invRL_nominal_stab} become infeasible, whereas the relaxed formulations remain solvable and achieve smaller gain error than the benchmark. These results highlight the benefit of relaxing the gain-matching constraint when the expert-gain estimate is inaccurate.

\subsection{Perturbed case}

We next consider the perturbed system \eqref{uncer_sys} with $D = [1.105 \quad -1.702 \quad -2.888]$ and $\gamma_1 = \gamma_2 = 2$. From Lemma~\ref{lem_cross_term}, the estimated expert gain for this system is
\[
\hat{K}_{e,2}= [1.539 \quad -1.298 \quad -2.191].
\]
Applying Algorithm~\ref{Alg 3} with $R=1$ yields the following cost weights:
\[
Q=\begin{bmatrix}
3.149 & -0.141&  0.267 \\
-0.141 & 1.906& 0.477 \\
 0.267&   0.477&  0.584
\end{bmatrix}
  \]
\[
N= [0.881\quad -1.859 \quad -2.672].
\]
The resulting controller achieves $\|K^\star_{e,2}- \hat{K}_{e,2} \|= 1.6 \times 10^{-6}$ and $\|\bar{A}+\bar{B}K^\star- F_{e,2} \|= 3.8 \times 10^{-7}$. In contrast, the method in \cite{xue2021inverseQlearning} yields an inverse controller with $\|K^\star_{e,2}- \hat{K}_{e,2} \|= 10.788$ and $\|\bar{A}+\bar{B}K^\star- F_{e,2} \|= 3.409$. This large performance gap is expected because, in the perturbed case, the target gain need not correspond exactly to an LQR-optimal controller; it may only be a stabilizing gain for \eqref{uncer_sys}. Figure~\ref{fig:3} compares the expert trajectories with those of the local system driven by the proposed inverse-RL controller from the same initial condition. The local responses closely follow the expert trajectories for all three states, which confirms that the recovered controller preserves the desired closed-loop behavior despite the perturbation.

\begin{figure*}
    \centering
    \begin{subfigure}[b]{0.33\textwidth}
        \centering
        \includegraphics[width=\linewidth]{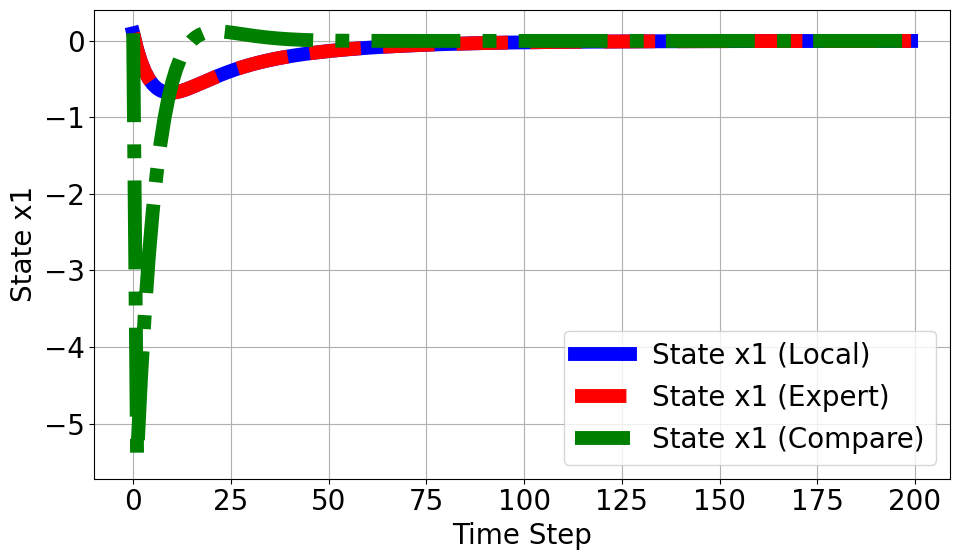}
        \caption{Governor valve position $\Delta \zeta$}
        \label{fig:3a}
    \end{subfigure}%
    \hfill
    \begin{subfigure}[b]{0.33\textwidth}
        \centering
        \includegraphics[width=\linewidth]{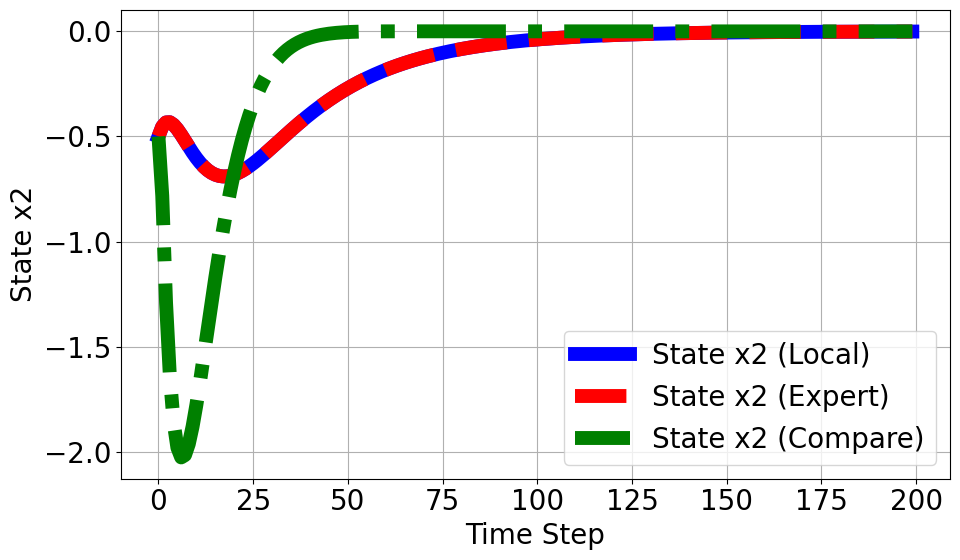}
        \caption{Generator mechanical power output $\Delta \mathcal{P}_m$}
         \label{fig:3b}
    \end{subfigure}%
    \hfill
    \begin{subfigure}[b]{0.33\textwidth}
        \centering
        \includegraphics[width=\linewidth]{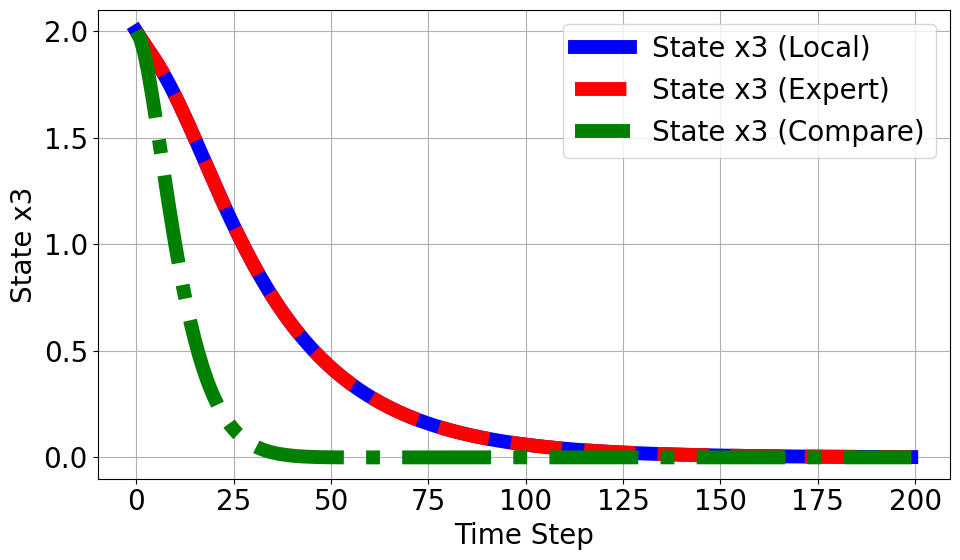}
        \caption{Frequency deviation $\Delta f_{G}$}
        \label{fig:3c}
    \end{subfigure}%
    \caption{Closed-loop state trajectories in the perturbed case.}
    \label{fig:3}
\end{figure*}

\subsection{Robust design}
In this subsection, Algorithm~\ref{Alg 4} is used to learn a robust generalized cost from randomly generated perturbation matrices $D_j$, where each entry is sampled from the Gaussian distribution $\mathcal{N}(0,\sigma=1.0)$. The hyperparameters are chosen as batch size $N_b=526$, maximum iteration number $T=7000$, and $R=1$. Figure~\ref{fig:4} shows the evolution of the Frobenius norms of $Q$ and $N$, together with the mini-batch mean loss. All three curves approach steady values after around 6000 iterations, indicating convergence of both the learned cost parameters and the training objective.

\begin{figure*}
    \centering
    \begin{subfigure}[b]{0.33\textwidth}
        \centering
        \includegraphics[width=\linewidth]{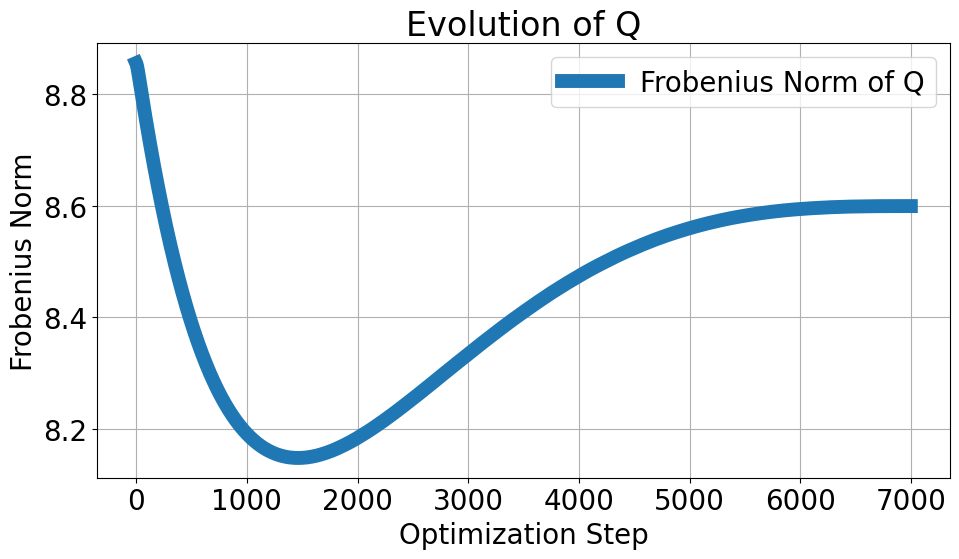}
        \caption{$\|Q\|_F$}
        \label{fig:4a}
    \end{subfigure}%
    \hfill
    \begin{subfigure}[b]{0.33\textwidth}
        \centering
        \includegraphics[width=\linewidth]{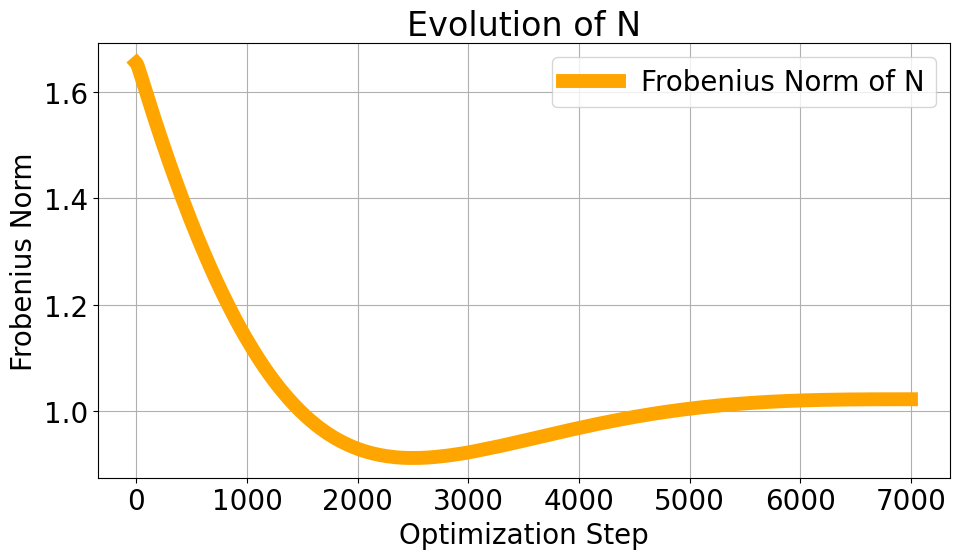}
        \caption{$\|N\|_F$}
         \label{fig:4b}
    \end{subfigure}%
    \hfill
    \begin{subfigure}[b]{0.33\textwidth}
        \centering
        \includegraphics[width=\linewidth]{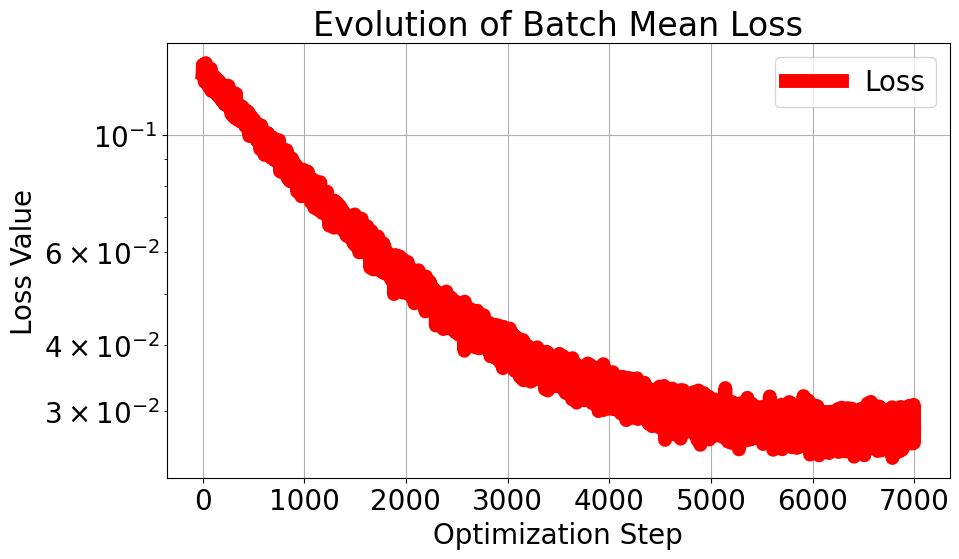}
        \caption{Mini-batch loss}
        \label{fig:4c}
    \end{subfigure}%
    \caption{Training history of the robust inverse-RL design.}
    \label{fig:4}
\end{figure*}

The final robust cost pair obtained after training is
\[
Q=\begin{bmatrix}
 3.108 &  1.013 & -0.385\\
 1.013  & 1.155 & 1.534\\ 
 -0.385 & 1.534  & 7.476  
\end{bmatrix}
\]
\[
N= [0.424 \quad  -0.111 \quad 0.923].
\]

To evaluate robustness, we compare the controller induced by the learned cost with the controller obtained from the true expert cost under three perturbation levels, namely $\sigma=1.0$, $\sigma=2.5$, and $\sigma=5.5$. Figure~\ref{fig:5} reports the corresponding state trajectories. At the training noise level $\sigma=1.0$, the proposed robust controller exhibits a slightly larger bias in the mean response than the controller derived from the true expert cost, but its trajectory variance is consistently smaller for all three states. This behavior is expected because Algorithm~\ref{Alg 4} is designed to reduce sensitivity to perturbations rather than to fit a single nominal system exactly.

As the perturbation level increases to $\sigma=2.5$ and $\sigma=5.5$, the advantage in variance reduction is preserved, while the mean bias of the robust controller becomes comparable to that of the true expert-cost controller. Hence, in the high-noise regime, the proposed method achieves nearly the same bias as the expert-cost design but with noticeably smaller dispersion.
 
\begin{figure*}
    \centering
    \begin{subfigure}[b]{0.33\textwidth}
        \centering
        \includegraphics[width=\linewidth]{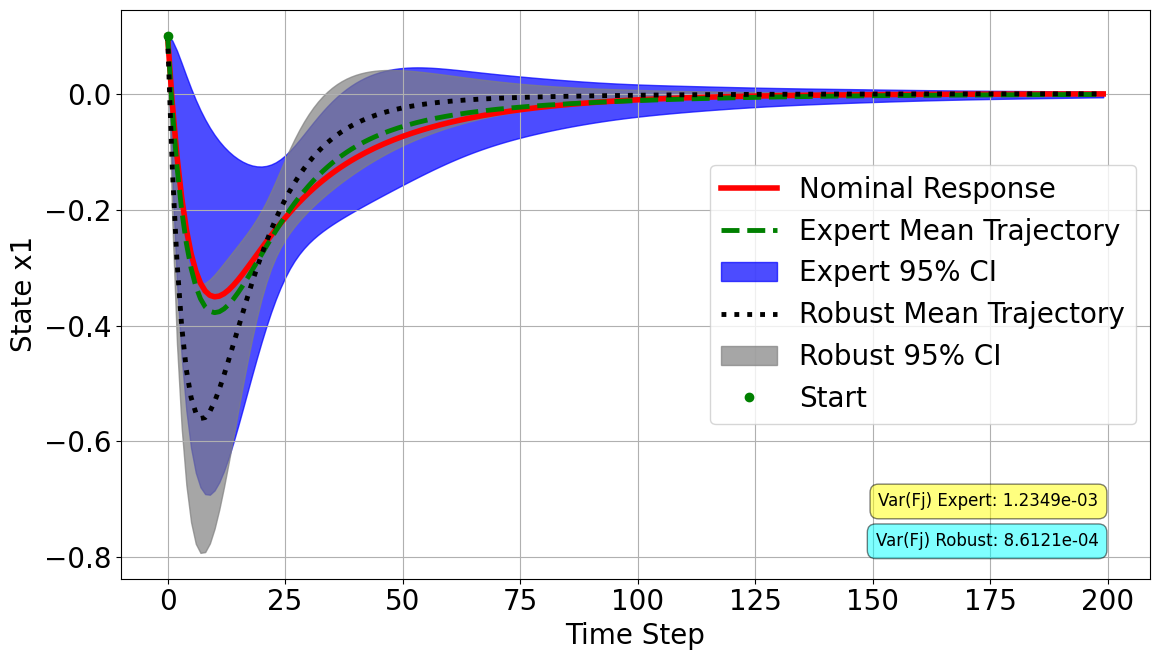}
        \caption{$\Delta \zeta$, $\sigma=1.0$}
        \label{fig:5a}
    \end{subfigure}%
    \hfill
    \begin{subfigure}[b]{0.33\textwidth}
        \centering
        \includegraphics[width=\linewidth]{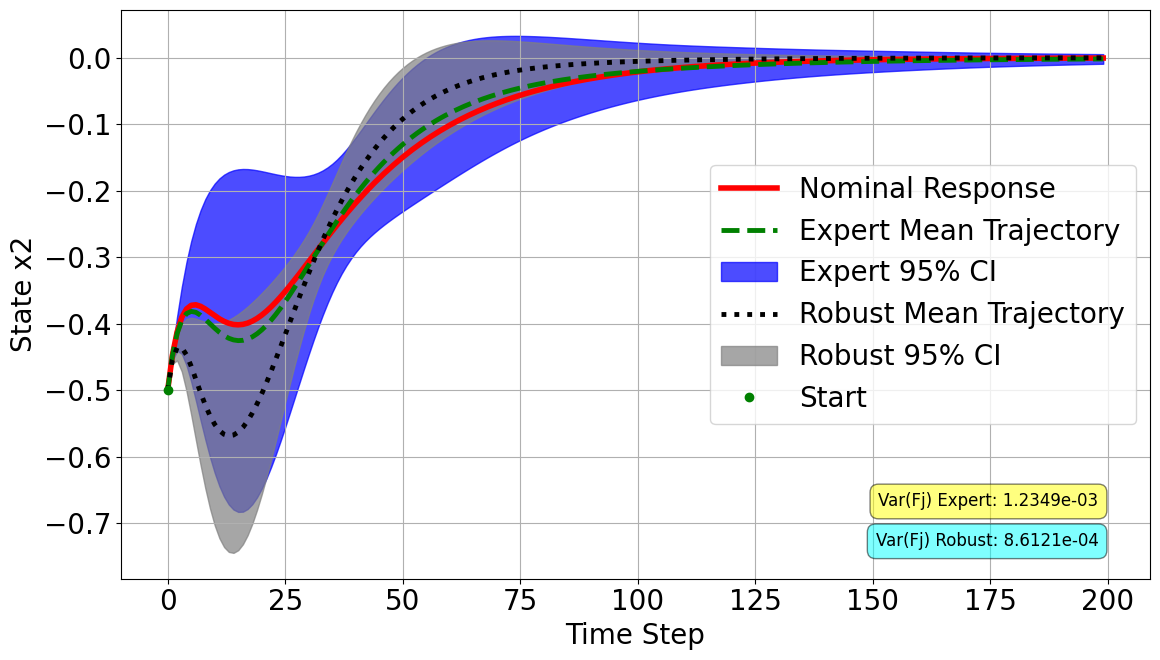}
        \caption{$\Delta \mathcal{P}_m$, $\sigma=1.0$}
         \label{fig:5b}
    \end{subfigure}%
    \hfill
    \begin{subfigure}[b]{0.33\textwidth}
        \centering
        \includegraphics[width=\linewidth]{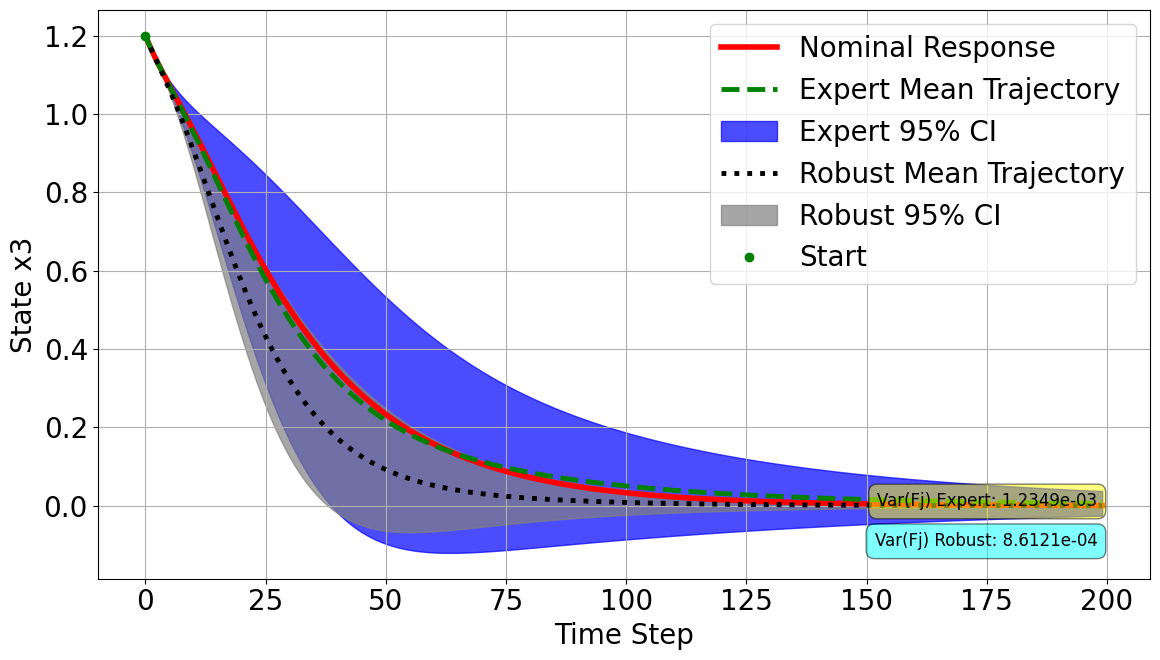}
        \caption{$\Delta f_G$, $\sigma=1.0$}
        \label{fig:5c}
    \end{subfigure}%
     \hfil
    \begin{subfigure}[b]{0.33\textwidth}
        \centering
        \includegraphics[width=\linewidth]{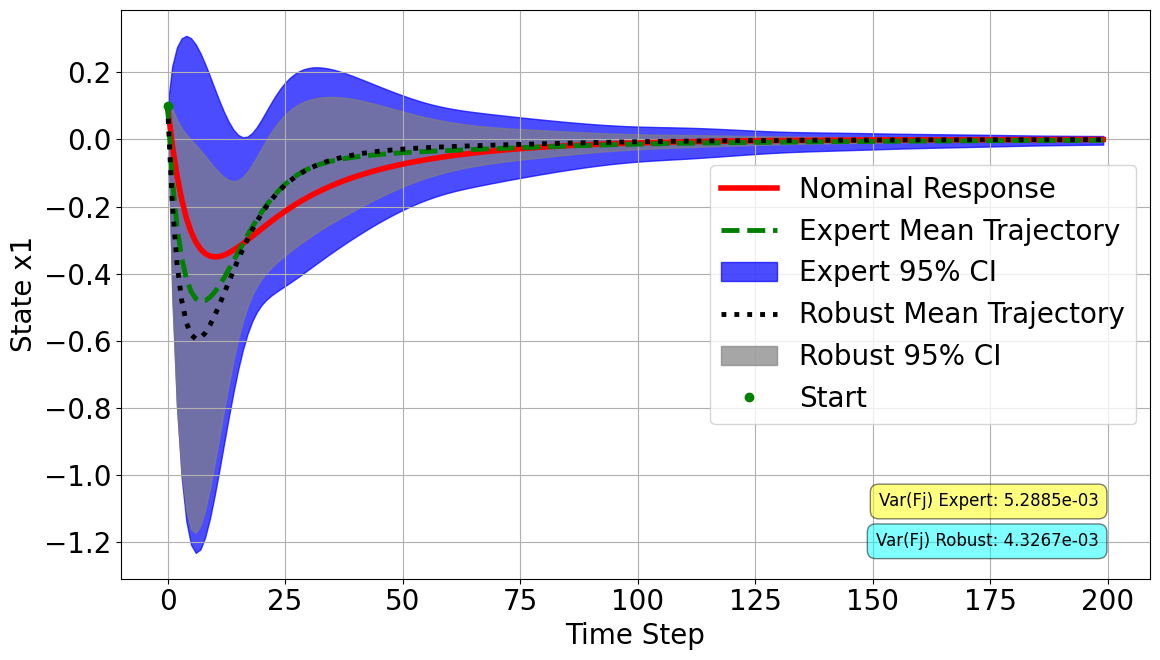}
        \caption{$\Delta \zeta$, $\sigma=2.5$}
        \label{fig:52a}
    \end{subfigure}%
    \hfill
    \begin{subfigure}[b]{0.33\textwidth}
        \centering
        \includegraphics[width=\linewidth]{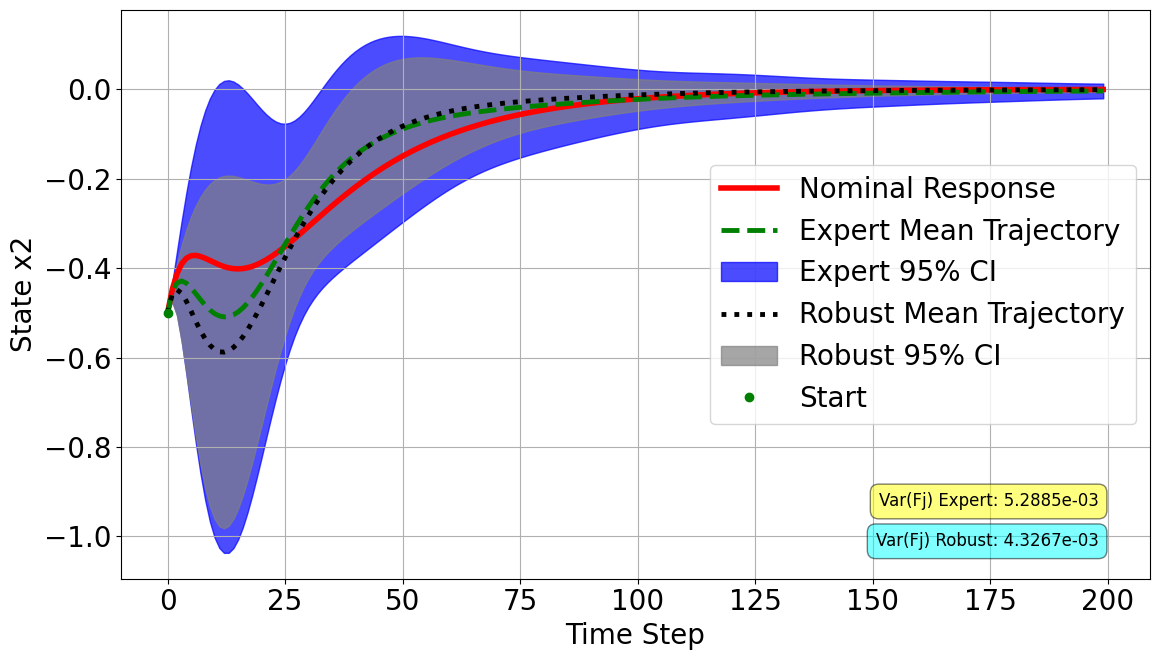}
        \caption{$\Delta \mathcal{P}_m$, $\sigma=2.5$}
         \label{fig:52b}
    \end{subfigure}%
    \hfill
    \begin{subfigure}[b]{0.33\textwidth}
        \centering
        \includegraphics[width=\linewidth]{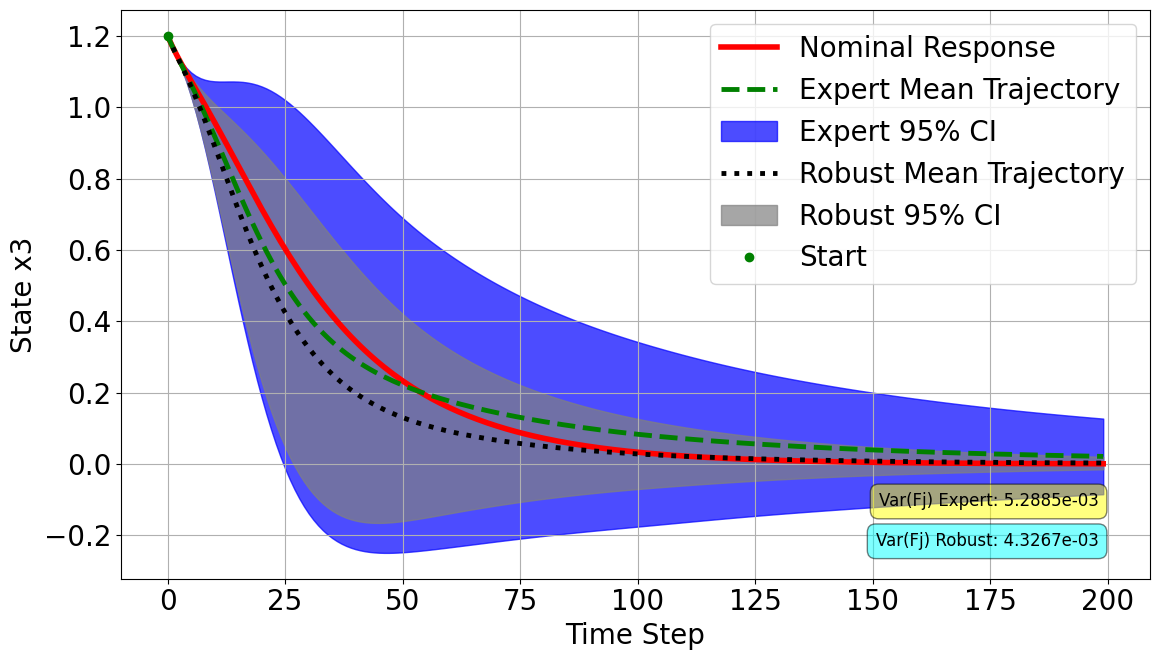}
        \caption{$\Delta f_G$, $\sigma=2.5$}
        \label{fig:52c}
    \end{subfigure}%
     \hfil
    \begin{subfigure}[b]{0.33\textwidth}
        \centering
        \includegraphics[width=\linewidth]{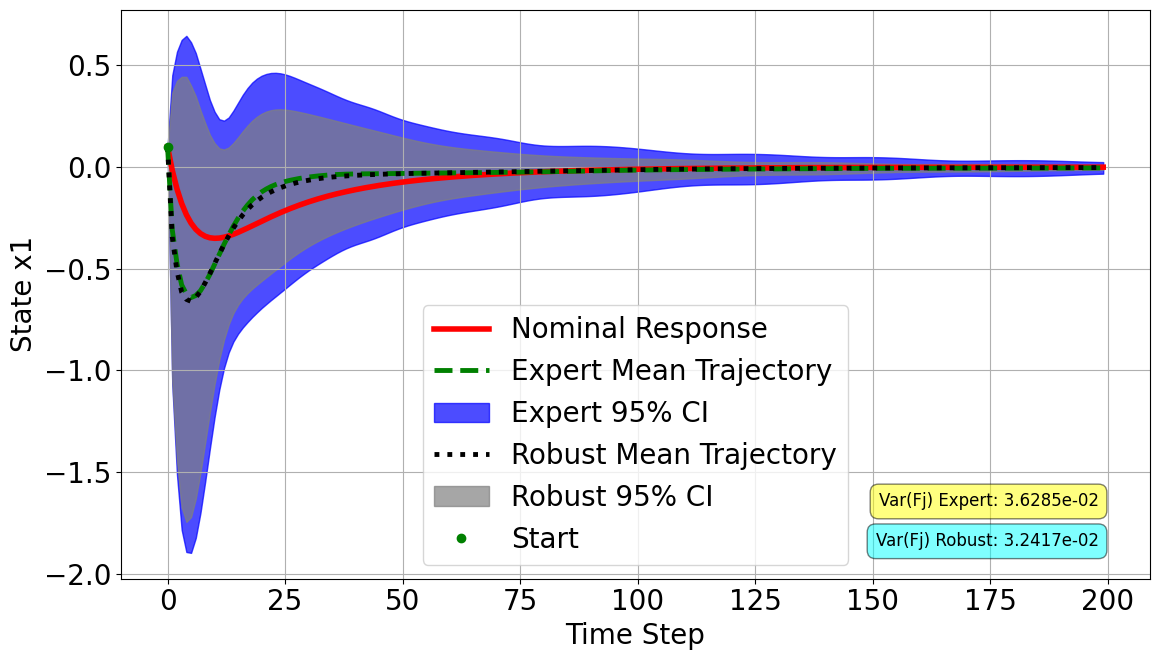}
        \caption{$\Delta \zeta$, $\sigma=5.5$}
        \label{fig:53a}
    \end{subfigure}%
    \hfill
    \begin{subfigure}[b]{0.33\textwidth}
        \centering
        \includegraphics[width=\linewidth]{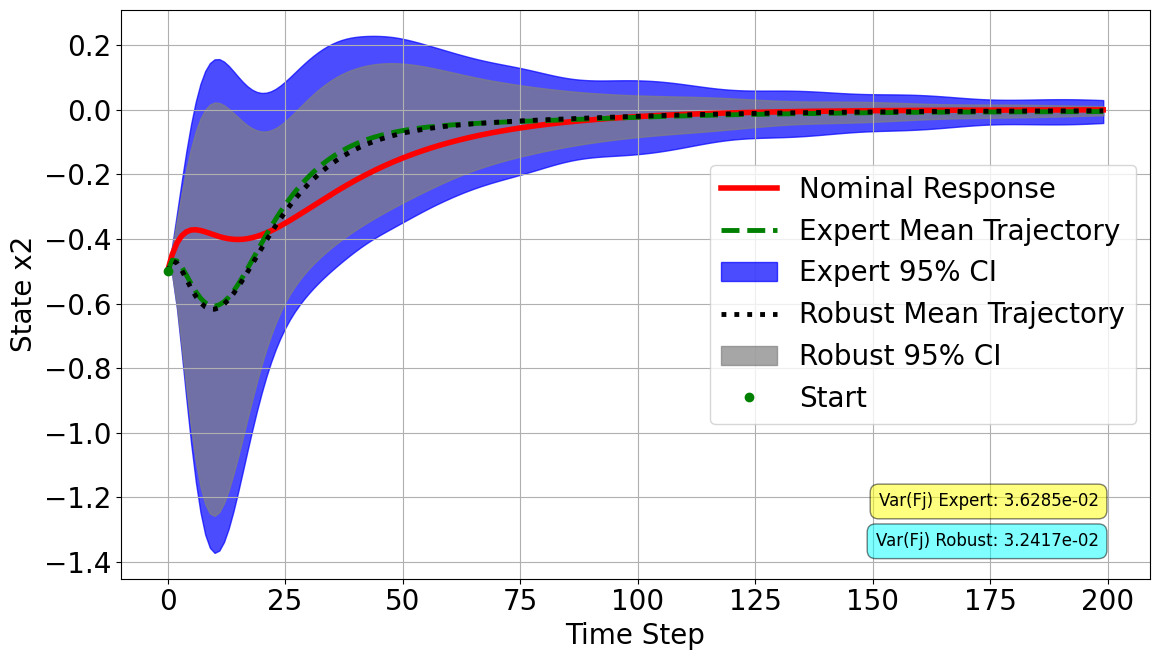}
        \caption{$\Delta \mathcal{P}_m$, $\sigma=5.5$}
         \label{fig:53b}
    \end{subfigure}%
    \hfill
    \begin{subfigure}[b]{0.33\textwidth}
        \centering
        \includegraphics[width=\linewidth]{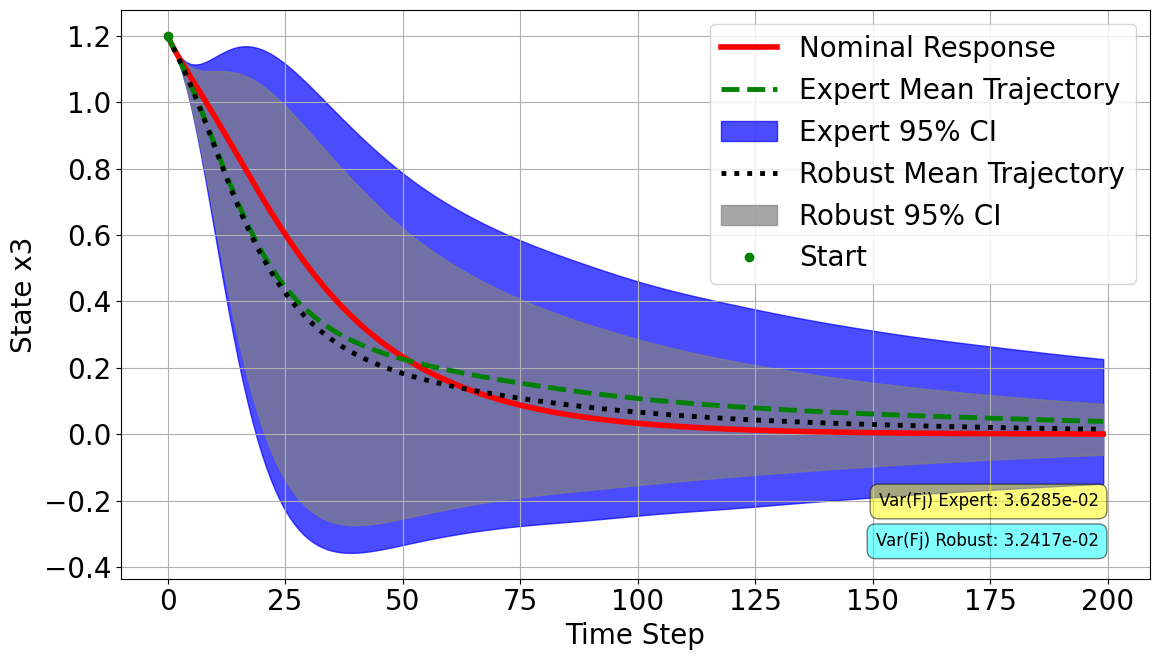}
        \caption{$\Delta f_G$, $\sigma=5.5$}
        \label{fig:53c}
    \end{subfigure}%

    \caption{State trajectories of the robust inverse-RL controller under different perturbation levels.}
    \label{fig:5}
\end{figure*}

\section{Conclusion}

This paper proposed a convex-optimization framework for data-driven inverse reinforcement learning of discrete-time linear systems with model uncertainty. For nominal systems, we derived a model-based semidefinite characterization of inverse optimality and a relaxed convex formulation that recovers an equivalent cost matrix together with a stabilizing controller from expert trajectory data. We then developed a model-free, off-policy reformulation by replacing the unknown system matrices with a regressed kernel matrix identified from local input--state data. For uncertain local systems, we showed that a standard LQR cost is generally too restrictive and introduced a generalized LQR formulation with a state--input cross term, which led to a convex data-driven inverse-RL method capable of recovering both a generalized cost pair and a stabilizing controller.

The paper also formulated a robust inverse-RL design problem over a distribution of plant perturbations and solved it using differentiable semidefinite programming and stochastic approximation. Simulation results on a discrete-time power-system example demonstrated accurate recovery of expert closed-loop behavior in both nominal and perturbed settings. In particular, the robust design produced smaller trajectory variance than the controller induced by the true expert cost, while achieving comparable bias in the high-noise regime. Overall, the proposed methods provide a simpler and more numerically stable alternative to classical iterative inverse-RL schemes because they avoid repeated inverse-RL policy/value updates and do not require an initial stabilizing gain. Future work will focus on extensions to output-feedback settings, more general classes of uncertain and nonlinear systems, and stronger theoretical guarantees under noisy data.

\printcredits
\section{Declaration of Competing Interest}
The authors declare that they have no known competing financial interests or personal relationships that could have appeared to influence the work reported in this article.

\section{Data availability}
The data that has been used is confidential.
\section{Acknowledgement}
With sincere thanks to the editors and reviewers for their valuable comments and suggestions, which contributed to the improvement of this paper.

\bibliography{cas-refs}
\bibliographystyle{elsarticle-num-names}


\end{document}